\DeclareMathOperator{\tr}{Tr}
\DeclareMathOperator{\id}{\mathrm{id}}
\DeclareMathOperator{\entropy}{H}
\DeclareMathOperator{\fidelity}{F}
\newcommand{\Lin}{\mathrm{L}}
\newcommand{\Channel}{\mathrm{C}}
\newcommand{\Unitary}{\mathrm{U}}
\newcommand{\Density}{\mathrm{D}}
\newcommand{\proj}[1]{|#1 \rangle\langle#1|}
\theoremstyle{plain}
\newtheorem{theorem}{Theorem}
\newtheorem{proposition}[theorem]{Proposition}
\theoremstyle{definition}
\newtheorem{definition}[theorem]{Definition}
\newtheorem{example}{Example}
\theoremstyle{remark}
\newtheorem{conjecture}{Conjecture}
\newtheorem{remark}[conjecture]{Remark}
\begin{document}
\author{Kohdai Kuroiwa}
\email{kkuroiwa@uwaterloo.ca}
\affiliation{Institute for Quantum Computing, University of Waterloo, Ontario, Canada, N2L 3G1}
\affiliation{Department of Physics and Astronomy, University of Waterloo}
\affiliation{Department of Combinatorics and Optimization, University of Waterloo}
\affiliation{Perimeter Institute for Theoretical Physics, Ontario, Canada, N2L 2Y5}

\author{Debbie Leung}
\email{wcleung@uwaterloo.ca}
\affiliation{Institute for Quantum Computing, University of Waterloo, Ontario, Canada, N2L 3G1}
\affiliation{Department of Combinatorics and Optimization, University of Waterloo}
\affiliation{Perimeter Institute for Theoretical Physics, Ontario, Canada, N2L 2Y5}

\date{\today}

\title{Rate Reduction of Blind Quantum Data Compression with Local Approximations Based on Unstable Structure of Quantum States}

\begin{abstract} 
    In this paper, we propose a new protocol for a data compression task, \textit{blind quantum data compression}, with finite local approximations. 
    The rate of blind data compression is susceptible to approximations even when the approximations are diminutive. 
    This instability originates from the sensitivity of a structure of quantum states against approximations, which makes the analysis of blind compression in the presence of approximations intractable. 
    In this paper, we constructed a protocol that takes advantage of the instability to reduce the compression rate substantially. 
    Our protocol shows a significant reduction in rate for specific examples we examined. 
    Moreover, we apply our methods to diagonal states, and propose two types of approximation methods in this special case. 
    We perform numerical experiments and observe that one of these two approximation methods performs significantly better than the other. 
    Thus, our analysis makes a first step toward general investigation of blind quantum data compression with the allowance of approximations towards further investigation of approximation-rate trade-off of blind quantum data compression. 
\end{abstract}

\maketitle
\section{Introduction}~\label{sec:introduction}
\textit{Data compression} is a fundamental information processing task
in information theory \cite{Shannon1948}.  In this task, the data is drawn from a known distribution identically and independently an
asymptotically large number of times, and the goal is to transmit the
data as efficiently as possible.  
The concepts of data, distributions, and data compression have natural 
generalizations in the quantum setting. 
\textit{Quantum data compression} under various setups has been extensively investigated~\cite{Schumacher1995,Richard1994,Barnum1996,Lo1995,Horodecki1998,Barnum2001,Dur2001,Kramer2001,Horodecki2000,Hayashi2006,Koashi2001,Bennett2001,Bennett2005,Abeyesinghe2006,Jain2002,Bennett2014a,Anshu2022,Khanian2020}. 
(See Sec.~\ref{subsec:review_compression} for more detailed discussions.)
Among all these setups, we focus on \textit{blind quantum data compression}, in which the sender does not know the description of the quantum states to be transmitted.
This problem has been widely studied~\cite{Schumacher1995,Richard1994,Barnum1996,Lo1995,Horodecki1998,Barnum2001,Kramer2001, Koashi2001,Anshu2022,Khanian2020}, 
and in particular, in Ref.~\cite{Koashi2001}, the structure of a set of quantum states is used to derive an optimal 
compression rate under the constraint that the error vanishes asymptotically.

More recently, Refs.~\cite{Anshu2022,Khanian2020} indicated the possibility that the error analysis conducted in Ref.~\cite{Koashi2001} can be further tightened. 
Indeed, the structure of quantum states used in the analysis is highly sensitive to approximations. 
Reference~\cite{Anshu2022} introduced a blind compression protocol for 
special quantum states called \textit{classical states}. 
The protocol achieves a considerably lower rate by allowing finite approximations. 
Previously, such data compression tasks with finite approximations 
have been investigated through \textit{quantum rate distortion theories}.
Here, a function called \textit{distortion measure} is employed to characterize possible approximations between the original and resulting states. 
A large body of results has been obtained for several setups of data compression and channel coding via rate distortion theories~\cite{Barnum2000,Devetak2001,Devetak2002,Datta2013a,Datta2013b,Datta2013c,Khanian_Kuroiwa2022}. 
Nevertheless, the approximate compression of general mixed quantum states has been largely unexplored.

In this paper, as a first step of a more thorough error analysis of blind quantum data compression with finite approximations, 
we propose and study a method to reduce the compression rate that exploits the sensitivity of the structure of quantum states under local approximation.  
In particular, in previous work \cite{Koashi2001}, the structure of quantum states define \textit{redundant information} which need not be transmitted.
We investigate how much this structure can be sensitive to approximations, by explicitly exhibiting ensembles that have information that is not exactly redundant, but approximately so.
In our examples, this approximately redundant information occupies a large dimension. 
The previous optimal protocol \cite{Koashi2001} is defined using the exact redundant information, and therefore may have a very high compression rate.
We instead propose a protocol that identifies and omits this large approximately redundant information, resulting in 
a substantial reduction of the compression rate.

We apply our protocol to classical states to show the rate-approximation relation of our protocol. 
We can interpret the generation of approximately redundant parts as a coarse-graining of probability distributions corresponding to the given classical states. 
More specifically, we propose two methods to generate approximately redundant parts: the \textit{arithmetic mean method} and the \textit{geometric mean method}.  
Comparing these two methods numerically, we gain an understanding of the trade-off between the approximation and the compression rate. 
Furthermore, we also numerically show the relation between the rate and the system dimension when the allowed approximation depends on the dimension. 
These numerical simulations show that the arithmetic mean method performs essentially better than the geometric mean method. 

In summary, we propose a new protocol for blind quantum data compression with finite local approximations. 
Our examples show the evidence that this protocol can show a substantial reduction in the compression rate.  
Furthermore, the numerical analysis of classical ensembles implies how the compression rate depends on approximations. 
These results shed light on a rigorous form of error-rate trade-off of blind data compression, 
leading to a further understanding of efficient quantum data transmission.

The rest of this paper is organized as follows. 
We first review the basic setup and previous results of blind quantum data compression in Sec.~\ref{sec:preliminary}. 
Then, we show and discuss our results on blind quantum data compression with finite local approximations in Sec.~\ref{sec:results_blind_compression}.
We construct a novel protocol that performs well compared with the previous one that works with asymptotically vanishing errors. We also show numerical experiments for blind compression of two-state classical ensembles. 
Finally, we summarize and discuss our results in Sec.~\ref{sec:Blind_discussions}. 

\section{Preliminaries}~\label{sec:preliminary}
In this section, we briefly review basic concepts for blind quantum data compression. 
In Sec.~\ref{subsec:notation}, we summarize relevant mathematical tools and notations. 
In Sec.~\ref{subsec:structure_quantum_ensemble}, we describe a structure of quantum ensembles, which is called \textit{Koashi-Imoto decomposition} (KI decomposition). 
In Sec.~\ref{subsec:review_compression}, we review the basic setups and previous results on quantum data compression. 

\subsection{Mathematical Background and Notation}\label{subsec:notation}
Throughout this paper, we use capital alphabets $A,B,\ldots$ to represent quantum systems. 
For a quantum system $A$, we let $\mathcal{H}_A$ denote the corresponding complex Hilbert space. 
A vector $\ket{\psi} \in \mathcal{H}_A$ is called a \emph{pure state} if it has unit norm. 
For convenience, we also say that $\mathcal{H}_A$ is a quantum system. 
We only consider finite-dimensional quantum systems in this paper. 
Given a quantum system $\mathcal{H}$, we let $d_{\mathcal{H}}$ denote the dimension of this system. 

For a quantum system $\mathcal{H}$, 
let $\Lin(\mathcal{H})$ denote the set of linear operators on $\mathcal{H}$. 
In particular, we define $I_{\mathcal{H}} \in \Lin(\mathcal{H})$ as the identity operator on $\mathcal{H}$. 
Density operators are positive semidefinite operators with unit trace in $\Lin(\mathcal{H})$.  
Moreover, let $\Density(\mathcal{H})$ denote the set of density operators on $\mathcal{H}$. 
When a density operator is rank $1$, it can be written as $\rho = \proj{\psi}$ using some state $\ket{\psi} \in \mathcal{H}$.  Such a density operator will also be called a pure state.  
In general, a density operator has a spectral decomposition which is a convex combination of pure states, therefore, it can be interpreted as a probabilistic mixture of pure states.  
Therefore, a density operator also represents a quantum state. 
If $\rho \in \Density(\mathcal{H})$ is not pure, it is called a mixed state. 
For a set of states on a quantum system $\mathcal{H}$, if they are simultaneously diagonalizable in a certain basis $\{|e_i\rangle\}_{i=1}^{d_{\mathcal{H}}}$, we can interpret each state as a distribution over the label $i$ for the basis.  We call such a set of states ``classical''.  
In this case, we may also call each state ``classical''.  

Given two quantum systems $\mathcal{H}_A$ and $\mathcal{H}_B$ with $d_{\mathcal{H}_A} \leq d_{\mathcal{H}_B}$, we use 
$\Unitary(\mathcal{H}_A,\mathcal{H}_B)$ to denote the set of isometries from $\mathcal{H}_A$ to $\mathcal{H}_B$. 
That is, for $U \in \Unitary(\mathcal{H}_A,\mathcal{H}_B)$, 
it holds that $UU^\dagger = I_{\mathcal{H}_A}$ where $\dagger$ represents the Hermitian conjugate. 
A linear map $\mathcal{N}:\Lin(\mathcal{H}_A) \to \Lin(\mathcal{H}_B)$ 
is called a quantum channel or a quantum operation if it is completely positive and trace-preserving. 
Let $\Channel(\mathcal{H}_A,\mathcal{H}_B)$ be the set of quantum channels from $\Lin(\mathcal{H}_A)$ to $\Lin(\mathcal{H}_B)$. 
In particular, we let $\id_{\mathcal{H}}$ denote the identity channel on $\mathcal{H}$. 

We can introduce a norm on $\Lin(\mathcal{H})$; let $\|\cdot\|_1$ be the trace norm. 
The trace norm is monotonic under quantum operations; 
that is, it holds that 
\begin{equation}
    \|\mathcal{N}(X)\|_1 \leqq \|X\|
\end{equation}
for all linear operators $X$ and quantum channels $\mathcal{N}$ applicable to $X$. 

For a quantum state $\rho \in \Density(\mathcal{H})$, the quantum entropy of $\rho$ is defined as 
\begin{equation}
    \entropy(\rho) \coloneqq -\tr(\rho\log_2\rho). 
\end{equation}
It is known that the quantum entropy is upper-bounded by the size of the system; 
that is, 
\begin{equation}
    \entropy(\rho) \leq \log_2 d_{\mathcal{H}}. 
\end{equation}

\subsection{Structure of Quantum Ensembles}\label{subsec:structure_quantum_ensemble}
We first review the definition of quantum ensembles, before describing 
a structure of them. 
\begin{definition}[Quantum Ensembles]
Let $\mathcal{H}$ be a quantum system, and let $\Sigma$ be an alphabet. A quantum ensemble $\Phi$ is a set of pairs of a positive real number and a quantum state
\begin{equation}
    \Phi \coloneqq \{(p_a,\rho_a) \in \mathbb{R}\times\Density(\mathcal{H}): a\in\Sigma\},
\end{equation}
where $\{p_a:a\in\Sigma\}$ forms a probability distribution; that is,
\begin{equation}
    0\leq p_a \leq 1 
\end{equation}
for all $a\in\Sigma$ and
\begin{equation}
    \sum_{a\in\Sigma} p_a = 1.
\end{equation}
We also write
\begin{equation}
    \Phi = \{p_a,\rho_a\}_{a\in\Sigma}.
\end{equation}
In addition, the average state $\rho_{\Phi}$ of a quantum ensemble $\Phi = \{p_a,\rho_a\}_{a\in\Sigma}$ is defined as
\begin{equation}
    \rho_{\Phi} \coloneqq \sum_{a\in\Sigma} p_a\rho_a. 
\end{equation}
\end{definition}

Here, we define a special class of quantum ensembles.
These are ensembles of classical states.  
\begin{definition}[Classical Ensembles]~\label{def:classical_ensemble}
Let $\Phi = \{p_a,\rho_a\}_{a\in\Sigma}$ be a quantum ensemble. 
The ensemble $\Phi$ is said to be a \textit{classical ensemble} 
if $\rho_a$ are simultaneously diagonalizable for all $a\in\Sigma$; that is, there exists an orthonormal basis of $\mathcal{H}$ such that each $\rho_a$ is diagonal.  
\end{definition}

We can also define another special class of quantum ensembles.  
A quantum ensemble $\{p_a,\rho_a\}_{a\in\Sigma}$ is called a \textit{pure-state ensemble}, if  
$\rho_a$ is pure for all $a\in\Sigma$.  
Suppose $\rho_a = \ket{\psi_a}\bra{\psi_a}$.  We may use the shorthand 
$\{p_a,\ket{\psi_a}\}_{a\in\Sigma}$ to represent the ensemble.  

Suppose that we have a quantum ensemble $\Phi = \{p_a,\rho_a\}_{a\in\Sigma}$. 
Reference~\cite{Koashi2002} gave a structure of a given quantum ensemble, which is called the \textit{Koashi-Imoto (KI) decomposition} or the \textit{KI structure}. 
Intuitively, when we have a quantum ensemble, we can decompose each state in the ensemble into the following three parts: \textit{classical part}, \textit{non-redundant quantum part}, and \textit{redundant part}. 
The following theorem provides a formal statement.  

\begin{theorem}[KI Decomposition~\cite{Koashi2002}]~\label{thm:KIdecomp}
Let $\mathcal{H}$ be a quantum system.
Let $\Phi = \{p_a,\rho_a\}_{a\in\Sigma}$ be an ensemble on the system.
Then, there exists a decomposition of the quantum system
\begin{equation}
    \mathcal{H} \coloneqq \bigoplus_{l \in \Xi} \mathcal{H}^{(l)}_Q \otimes \mathcal{H}^{(l)}_R 
\end{equation}
and a corresponding isometry 
\begin{equation}
\Gamma_{\Phi} \in \Unitary\left(\mathcal{H},\bigoplus_{l \in \Xi} \mathcal{H}^{(l)}_Q \otimes \mathcal{H}^{(l)}_R\right)
\end{equation}
satisfying the following conditions.
\begin{enumerate}
    \item 
    For all $a\in\Sigma$,
    \begin{equation}\label{eq:KIdecomp}
        \Gamma_{\Phi} \rho_a \Gamma_{\Phi}^\dagger = \bigoplus_{l \in \Xi} q^{(a,l)}\rho^{(a,l)}_Q\otimes \rho^{(l)}_R. 
    \end{equation}
    Here, for each $a \in \Sigma$, 
    $\{q^{(a,l)}:l \in \Xi\}$ forms a probability distribution over labels $l\in\Xi$,
    $\rho^{(a,l)}_Q \in \Density(\mathcal{H}^{l}_Q)$ is a density operator on system $\mathcal{H}^{l}_Q$, depending on both $a\in\Sigma$ and $l\in\Xi$, and $\rho^{(l)}_R \in \Density(\mathcal{H}^{l}_R)$ is a density operator on system $\mathcal{H}^{l}_R$, which is independent of $a\in\Sigma$. 

    \item 
    For each $l\in\Xi$,
    if a projection operator $P: \mathcal{H}^{(l)}_Q \to \mathcal{H}^{(l)}_Q$ satisfies  
    \begin{equation}
        P q^{(a,l)}\rho^{(a,l)}_Q = q^{(a,l)}\rho^{(a,l)}_Q P 
    \end{equation}
    for all $a \in \Sigma$, then $P = I_{\mathcal{H}^{(l)}_Q}$ or $P = 0$.
    
    \item 
    For all $l,l' \in \Xi$ such that $l\neq l'$,
    there exists no isometry $V \in \Unitary(\mathcal{H}^{(l)}_Q,\mathcal{H}^{(l')}_Q)$ such that
    \begin{equation}
        V q^{(a,l)}\rho^{(a,l)}_Q = \alpha q^{(a,l')}\rho^{(a,l')}_Q V
    \end{equation}
    with some positive real number $\alpha$ for all $a\in\Sigma$.
\end{enumerate}
\end{theorem}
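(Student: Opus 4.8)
The plan is to reduce the statement to the classical structure theory of finite-dimensional $C^*$-algebras (the Artin--Wedderburn decomposition) together with a single Schur-type lemma, and then to reassemble the resulting blocks into the desired form. First I would form the $*$-algebra $\mathcal{A}\subseteq\Lin(\mathcal{H})$ generated by $\{\rho_a:a\in\Sigma\}$ and the identity $I_{\mathcal{H}}$; since each $\rho_a$ is Hermitian and $\mathcal{H}$ is finite dimensional, $\mathcal{A}$ is a finite-dimensional $C^*$-algebra. By the structure theorem there is a unitary $\Gamma_0\in\Unitary(\mathcal{H},\bigoplus_m\mathcal{K}_m\otimes\mathcal{L}_m)$ with $\Gamma_0\,\mathcal{A}\,\Gamma_0^\dagger=\bigoplus_m\Lin(\mathcal{K}_m)\otimes I_{\mathcal{L}_m}$, i.e.\ $\mathcal{A}$ acts as a full matrix algebra on each ``active'' factor $\mathcal{K}_m$ and trivially on each multiplicity factor $\mathcal{L}_m$. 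Because $\rho_a\in\mathcal{A}$, its $m$-th block must take the form $\Gamma_0\rho_a\Gamma_0^\dagger|_m=\mu^{(a,m)}\,\tau^{(a,m)}\otimes I_{\mathcal{L}_m}$ with $\tau^{(a,m)}\in\Density(\mathcal{K}_m)$ and $\mu^{(a,m)}\geq 0$. This already exhibits a product structure block by block, but with the trivial redundant state $I_{\mathcal{L}_m}$ and with blocks that may be equivalent to one another; the remaining work is to merge equivalent blocks so that the redundant factor becomes a genuine, $a$-independent state $\rho^{(l)}_R$ and so that Condition~3 holds.

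Next I would introduce an equivalence relation on block labels, declaring $m\sim m'$ exactly when there is an isometry $V$ and a single scalar $\alpha>0$ with $V\,\mu^{(a,m)}\tau^{(a,m)}=\alpha\,\mu^{(a,m')}\tau^{(a,m')}\,V$ for \emph{all} $a\in\Sigma$ --- precisely the negation of Condition~3 at the block level. The key lemma is that, because $\{\tau^{(a,m)}\}_a$ generates the full algebra $\Lin(\mathcal{K}_m)$, any such intertwining isometry is forced to be unitary: $VV^\dagger$ is a projection commuting with every $\tau^{(a,m')}$, hence equals $0$ or $I$ by Schur, and being a nonzero isometry it must equal $I$. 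This gives $\dim\mathcal{K}_m=\dim\mathcal{K}_{m'}$, makes $\sim$ a genuine equivalence relation, and --- by taking traces in the intertwining relation and using $\tr\tau^{(a,m)}=1$ --- forces the weight ratio $\mu^{(a,m)}/\mu^{(a,m')}=\alpha$ to be \emph{independent of} $a$. This last point is the whole game: it is exactly the $a$-independence needed to pull a common state out of the multiplicity factors.

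Finally I would let $\Xi$ index the equivalence classes, set $\mathcal{H}^{(l)}_Q:=\mathcal{K}_m$ for a chosen representative $m$ of class $l$, rotate every member block into this common $Q$-representation using the unitaries supplied by the lemma, and collect the multiplicity spaces into $\mathcal{H}^{(l)}_R:=\bigoplus_{m\in l}\mathcal{L}_m$. Absorbing these rotations together with $\Gamma_0$ into a single isometry $\Gamma_{\Phi}$, each block becomes $q^{(a,l)}\rho^{(a,l)}_Q\otimes\rho^{(l)}_R$, where $\rho^{(a,l)}_Q$ is the common normalized $\tau$, the fixed operator $\rho^{(l)}_R$ is the block-scalar state on $\bigoplus_{m\in l}\mathcal{L}_m$ built from the now $a$-independent weight ratios and the multiplicities, and $q^{(a,l)}$ collects the normalization; this is Condition~1. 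Condition~2 follows since a projection commuting with all $\rho^{(a,l)}_Q$ commutes with the full algebra $\Lin(\mathcal{H}^{(l)}_Q)$ they generate, and is therefore $0$ or $I$; and Condition~3 holds because distinct classes were defined to be inequivalent.

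The main obstacle I anticipate is not the Artin--Wedderburn step but the correct formulation of the equivalence: one must recognize that the single scalar $\alpha$ in Condition~3 is simultaneously responsible for the unitary equivalence of the normalized $Q$-ensembles \emph{and} for the $a$-independence of the weights, since only this $a$-independence permits a single redundant state $\rho^{(l)}_R$ to be factored out uniformly over $a$. Verifying that the resulting grouping is maximal (so that Condition~3 genuinely holds for the \emph{final} blocks) and that the intertwining isometries can be chosen coherently across an entire class rather than only pairwise is where the care is required; the Schur-lemma step that promotes isometries to unitaries is the technical linchpin that makes all of this consistent.
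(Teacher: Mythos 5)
Your argument is correct, but it takes a genuinely different route from the paper's treatment: the paper does not prove Theorem~\ref{thm:KIdecomp} at all --- it imports it from Koashi and Imoto~\cite{Koashi2002}, whose original argument is an iterative refinement procedure in which one starts from the trivial decomposition and repeatedly splits blocks violating condition~2 and merges blocks violating condition~3, with termination guaranteed by a monotone on the number and dimensions of blocks; that construction simultaneously establishes the stronger invariance property (every operation leaving all $\rho_a$ fixed acts as the identity on the $Q$ factors), which is what the optimality analysis of blind compression in~\cite{Koashi2001} actually exploits. You instead outsource the combinatorics to the Artin--Wedderburn structure of $\mathcal{A} = \langle I, \rho_a : a\in\Sigma\rangle$ and perform a single merging pass, and your key lemma is exactly the right linchpin: writing $X_a$, $Y_a$ for the unnormalized blocks, the relation $VX_a = \alpha Y_a V$ together with its adjoint gives $VV^\dagger Y_a = Y_a VV^\dagger$, fullness of the block algebra (which holds by construction, since the blocks come from $\mathcal{A}$ itself) forces $VV^\dagger = I$, and then conjugation plus the trace yields $\mu^{(a,m)} = \alpha\,\mu^{(a,m')}$ --- the $a$-independence that lets the redundant factor $\rho_R^{(l)} \propto \bigoplus_{m\in l}\beta_m I_{\mathcal{L}_m}$ be factored out uniformly; your closing worry about choosing intertwiners coherently across a class is unfounded, since pairwise unitaries from a fixed representative suffice and condition~3 for the merged blocks follows because any intertwiner between final blocks descends to one between representatives (I checked on examples such as $\rho_a = \sigma_a\otimes\omega$ that the merging pass correctly reconstructs the tensor factor $\omega$ that the generated algebra only sees through its eigenvalues). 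Two points deserve a sentence in a full write-up: work with the unnormalized $X^{(a,m)}$ throughout so that vanishing weights $\mu^{(a,m)}=0$ (e.g.\ orthogonally supported states) cause no division problems --- the trace identity $\mu^{(a,m)}=\alpha\,\mu^{(a,m')}$ handles zero patterns automatically; and if the joint support of the ensemble is a proper subspace, the complementary block has all $X^{(a,m)}=0$, where Artin--Wedderburn makes $\mathcal{K}_m$ one-dimensional (the restricted algebra being $\mathbb{C}I$), so condition~2 survives even there. What each approach buys: yours is shorter and conceptually transparent for the existence statement exactly as quoted, and matches later algebraic treatments of the Koashi--Imoto structure (e.g.\ by Hayden, Jozsa, Petz, and Winter); the original iterative proof, though lengthier and more elementary, additionally delivers uniqueness of the maximal decomposition and its operational characterization as the structure preserved by all ensemble-non-disturbing operations, which is the form in which the theorem is put to work elsewhere in this paper.
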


The first statement in Theorem~\ref{thm:KIdecomp} shows the form of the KI decomposition. The second and third statements ensure that the decomposition~\eqref{eq:KIdecomp} is maximal; that is, we cannot further refine the structure. Indeed, the second one states that we cannot further decompose each block of the KI decomposition; the third one means that we cannot relate a block of the decomposition to another block. 
In the decomposition shown in Eq.~\eqref{eq:KIdecomp}, we can observe that all the quantum states in the given ensemble $\Phi$ can be decomposed in a block-diagonal structure. Note that $\rho^{(l)}_R \in \Density(\mathcal{H}^{(l)}_R)$ does not depend on $a\in\Sigma$; thus it is called redundant because it does not contain information about the label $a \in \Sigma$. 
Hereafter, when we specify the KI decomposition of a given ensemble $\Phi = \{p_a,\rho_a\}_{a\in\Sigma}$, we may omit $\Gamma_{\Phi}$ and write 
\begin{equation}
    \rho_a = \bigoplus_{l \in \Xi} q^{(a,l)}\rho^{(a,l)}_Q\otimes \rho^{(l)}_R
\end{equation}
for brevity.

Once the KI decomposition of a given ensemble is obtained, we can define two quantum channels that respectively correspond to removing and attaching the ensemble's redundant parts.
The proof of this theorem is shown in Appendix~\ref{proof_KIoperation}.

\begin{theorem}~\label{thm:KIoperations}
Let $\Phi = \{p_a,\rho_a\}_{a\in\Sigma}$ be an ensemble on a quantum system $\mathcal{H}$.  
Consider the KI decomposition of $\Phi$: 
\begin{equation}
    \mathcal{H} \coloneqq \bigoplus_{l \in \Xi} \mathcal{H}^{(l)}_Q \otimes \mathcal{H}^{(l)}_R 
\end{equation}
such that 
\begin{equation}
    \rho_a = \bigoplus_{l \in \Xi} q^{(a,l)}\rho^{(a,l)}_Q\otimes \rho^{(l)}_R.
\end{equation}
Then, there exist quantum channels $\mathcal{K}_{\mathrm{off}}$ and $\mathcal{K}_{\mathrm{on}}$ such that
\begin{align}
    &\mathcal{K}_{\mathrm{off}}(\rho_a) = \bigoplus_{l \in \Xi} q^{(a,l)}\rho^{(a,l)}_Q, \\
    &\mathcal{K}_{\mathrm{on}}\left(\bigoplus_{l \in \Xi} q^{(a,l)}\rho^{(a,l)}_Q\right) = \rho_a
\end{align}
for all $a\in\Sigma$.
\end{theorem}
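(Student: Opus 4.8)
The plan is to exhibit both channels explicitly and then check the three required properties: complete positivity, trace preservation, and the stated action on the ensemble states. Throughout I will work in the decomposed picture, identifying $\mathcal{H}$ with $\bigoplus_{l\in\Xi}\mathcal{H}^{(l)}_Q\otimes\mathcal{H}^{(l)}_R$ via $\Gamma_{\Phi}$, and I will let $\Pi_l$ denote the orthogonal projection onto the $l$-th summand $\mathcal{H}^{(l)}_Q\otimes\mathcal{H}^{(l)}_R$, so that $\sum_{l}\Pi_l = I_{\mathcal{H}}$ and $\Pi_l\Pi_{l'}=\delta_{l,l'}\Pi_l$. For the channel that discards the redundant factors I would set
\begin{equation}
    \mathcal{K}_{\mathrm{off}}(X) \coloneqq \bigoplus_{l\in\Xi}\tr_{\mathcal{H}^{(l)}_R}\!\bigl(\Pi_l X \Pi_l\bigr),
\end{equation}
that is, project onto each block, trace out the redundant factor of that block, and reassemble the results as a direct sum on $\bigoplus_l \mathcal{H}^{(l)}_Q$. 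For the channel that reinstalls the fixed redundant states I would set
\begin{equation}
    \mathcal{K}_{\mathrm{on}}(Y) \coloneqq \bigoplus_{l\in\Xi}\bigl(\Pi_l' Y \Pi_l'\bigr)\otimes\rho^{(l)}_R,
\end{equation}
where $\Pi_l'$ is the projection onto the $l$-th summand $\mathcal{H}^{(l)}_Q$ of the input space $\bigoplus_l \mathcal{H}^{(l)}_Q$.

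Next I would verify the defining properties. Complete positivity is immediate in both cases: each map $X\mapsto \Pi_l X \Pi_l$ (conjugation by a Hermitian projector) is completely positive, the partial trace is completely positive, tensoring with the fixed density operator $\rho^{(l)}_R$ is completely positive, and direct sums of completely positive maps are completely positive. Trace preservation of $\mathcal{K}_{\mathrm{off}}$ follows from $\tr\bigl(\tr_{\mathcal{H}^{(l)}_R}(\Pi_l X \Pi_l)\bigr) = \tr(\Pi_l X \Pi_l) = \tr(\Pi_l X)$ together with $\sum_l \Pi_l = I_{\mathcal{H}}$; trace preservation of $\mathcal{K}_{\mathrm{on}}$ follows from $\tr\bigl((\Pi_l' Y \Pi_l')\otimes\rho^{(l)}_R\bigr) = \tr(\Pi_l' Y)\,\tr(\rho^{(l)}_R) = \tr(\Pi_l' Y)$ and $\tr(\rho^{(l)}_R) = 1$, again summed over $l$.

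Finally I would check the action on the ensemble. Since $\Pi_l \rho_a \Pi_l = q^{(a,l)}\rho^{(a,l)}_Q\otimes\rho^{(l)}_R$ and $\tr(\rho^{(l)}_R) = 1$, tracing out the redundant factor gives $\mathcal{K}_{\mathrm{off}}(\rho_a) = \bigoplus_l q^{(a,l)}\rho^{(a,l)}_Q$; and since the block-diagonal input $\bigoplus_l q^{(a,l)}\rho^{(a,l)}_Q$ is left invariant by the projectors $\Pi_l'$, tensoring block $l$ with $\rho^{(l)}_R$ reproduces $\mathcal{K}_{\mathrm{on}}\bigl(\bigoplus_l q^{(a,l)}\rho^{(a,l)}_Q\bigr) = \bigoplus_l q^{(a,l)}\rho^{(a,l)}_Q\otimes\rho^{(l)}_R = \rho_a$, as required.

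I expect no deep obstacle, since the construction is explicit and the verifications are routine. The one point demanding care is the bookkeeping of the direct-sum (block) structure: the conjugation by $\Pi_l$ must both select the correct block and annihilate cross-block coherences, so that trace preservation holds on all inputs rather than merely on the block-diagonal ensemble states. If one prefers to phrase the channels as acting on the original, undecomposed $\mathcal{H}$, I would precompose $\mathcal{K}_{\mathrm{off}}$ with the conjugation $X\mapsto\Gamma_{\Phi} X\Gamma_{\Phi}^\dagger$ and postcompose $\mathcal{K}_{\mathrm{on}}$ with $Y\mapsto\Gamma_{\Phi}^\dagger Y\Gamma_{\Phi}$; as $\Gamma_{\Phi}$ is an isometry this leaves all of the above CPTP verifications intact.
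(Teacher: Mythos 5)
Your proposal is correct and matches the paper's proof in substance: the paper's Appendix~\ref{proof_KIoperation} builds exactly these two channels via explicit Kraus operators ($I_{\mathcal{H}^{(l)}_Q}\otimes\bra{j_l}$ for $\mathcal{K}_{\mathrm{off}}$ and $I_{\mathcal{H}^{(l)}_Q}\otimes\sqrt{r_{k_l}}\ket{k_l}$ for $\mathcal{K}_{\mathrm{on}}$), which implement precisely your block pinching followed by partial trace, and pinching followed by tensoring with $\rho^{(l)}_R$. Your verification of complete positivity, trace preservation, and the action on the ensemble is the same routine check, merely phrased through compositions of standard CPTP maps rather than a Kraus representation.
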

Using $\mathcal{K}_{\mathrm{off}}$ and $\mathcal{K}_{\mathrm{on}}$, we can reversibly remove the redundant parts from the ensemble. 
In this paper, we refer to $\mathcal{K}_{\mathrm{off}}$ and $\mathcal{K}_{\mathrm{on}}$ as the \textit{KI operations}. 

\subsection{Review of Quantum Data Compression Problems}~\label{subsec:review_compression}
We discuss the basic setup for quantum data compression, the 
variations studied to-date, and known results about them. 

Quantum data compression is a quantum information processing task between two parties, the sender and the receiver.  In this task, the sender aims to transmit quantum data to the receiver as efficiently as possible, given the assumption that the data consists of asymptotically large number of states created independently.  The best rate captures fundamental quantum properties of the transmitted states.  

\begin{figure}
\centering
\includegraphics[width = 0.8\linewidth]{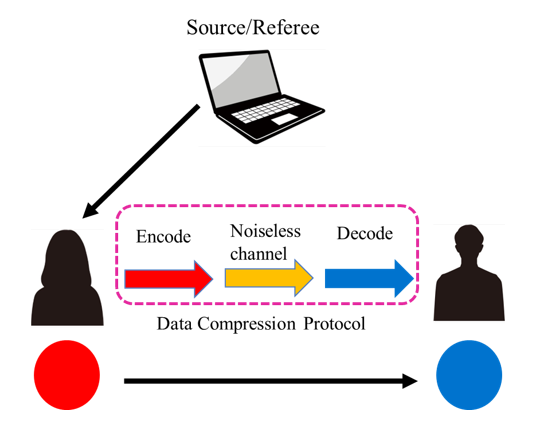}
\caption{Schematic diagram showing the setup of quantum data compression. In this task, the sender is given a target state (red-colored in the figure) from the referee. Then, the sender encodes the state and sends it to the receiver via a noiseless quantum channel. The receiver decodes the transmitted state to recover the original state (blue-colored in the figure).}
\label{fig:data_compression}
\end{figure}

We provide a more formal description below. 
Suppose that $\{p_x,\rho_x\}_{x\in\Sigma}$ is a quantum ensemble on system $A$.  
A referee draws $x \in \Sigma$ with probability $p_x$, writes down the outcome in a system $X$ (a classical system) with Hilbert space $\mathcal{H}_{X} \coloneqq \mathbb{C}^{\Sigma}$, and prepares $\rho_x$ in system $A$.  The resulting 
state is given by $\rho_{XA} \coloneqq \sum_{x\in\Sigma} p_x\ket{x}\bra{x}\otimes \rho_x$. 
Note that $\rho_A = \sum_{x\in\Sigma} p_x\rho_x$ is the average state of the ensemble. 
If the referee repeats the above process $n$ times \emph{independently},
then the resulting state is
\begin{align}
& \rho_{X_1 A_1} \otimes
\rho_{X_2 A_2} \otimes \cdots \otimes \rho_{X_n A_n}
\nonumber
\\[1ex]
= & \sum_{x_1,\cdots,x_n \in\Sigma} 
  p_{x_1} p_{x_2} \cdots p_{x_n} \ket{x_1}\bra{x_1} \otimes \cdots \otimes \ket{x_n}\bra{x_n} \hspace*{5ex}
\nonumber
\\[-2ex]
  & 
  \hspace*{23ex} 
  \otimes \rho_{x_1}\otimes \rho_{x_2}\otimes \cdots \otimes \rho_{x_n}
\end{align} 
where $|x_i\rangle\langle x_i|$ is a state on $X_i$ and $\rho_{x_i}$ 
is a state on $A_i$.  We use the shorthand 
$x^n$ for $x_1 x_2 \cdots x_n$, $p_{x^n}$ for $p_{x_1} p_{x_2} \cdots p_{x_n}$
and $\rho_{x^n}$ for 
$\rho_{x_1}\otimes \rho_{x_2}\otimes \cdots \otimes \rho_{x_n}$, and 
$\rho_{X^nA^n}$ for 
$\rho_{X_1 A_1} \otimes
\rho_{X_2 A_2} \otimes \cdots \otimes \rho_{X_n A_n}$.  
Then, the referee gives the systems $A_1 A_2 \cdots A_n$ to 
the sender.  
After receiving $A_1 \cdots A_n$ from the referee, the sender applies to
them an encoding channel $\mathcal{E}_n \in \Channel(\otimes_{i=1}^n
\mathcal{H}_{A_i},\mathcal{H}_{M_n})$.  
When the assistance of entanglement is allowed, the sender jointly 
encodes $A_1 \cdots A_n$ with her share of a given entangled state. 
Afterward, the sender transmits the 
system $M_n$ to the receiver through a noiseless quantum channel. 
The receiver decompresses the state by a decoding channel
$\mathcal{D}_n \in \Channel(\mathcal{H}_{M_n},\otimes_{i=1}^n
\mathcal{H}_{A'_i})$, 
where $A'_i\simeq A_i$ for all $i$.  
The final state 
$\id_{\mathcal{H}_X}^{\otimes n}\otimes(\mathcal{D}_n\circ\mathcal{E}_n)(\rho_{X^nA^n})$ 
has to resemble the initial state $\rho_{X^nA^n}$ in a certain way.  
The compression protocol is defined as a pair $(\mathcal{E}_n,\mathcal{D}_n)$ of the encoding and decoding
channels, but we also refer to their combined effect 
$\mathcal{D}_n\circ\mathcal{E}_n$ as the \textit{protocol}.

The compression rate $R_n$ of the protocol is given by the number of qubits transmitted 
divided by $n$: 
\begin{equation}
    R_n \coloneqq \frac{\log_2|M_n|}{n}. 
\end{equation}

In the basic setup shown above, 
the protocol and compression rate depend crucially on the following: 
\begin{enumerate}
    \item whether the given ensemble consists of pure states or mixed states;
    \item whether the referee tells the sender the label of the state to be compressed;
    \item what error criterion we adopt between the original state and the one recovered
      by the receiver. 
\end{enumerate}

In the second point mentioned above, if the sender knows the label, the
task is called \textit{visible compression}; 
otherwise, the task is called \textit{blind compression}.  
By definition, a blind compression protocol also works in the visible setting.  
Correspondingly, the compression rate in the blind setting is no less than that in the visible setting.  
On the third point, two definitions of error criteria have been considered: 
\textit{global error criterion} and \textit{local error criterion}. 
Under the global error criterion, a protocol satisfying 
\begin{equation}
    \left\|\id_{\mathcal{H}_X}^{\otimes n}\otimes(\mathcal{D}_n\circ\mathcal{E}_n)(\rho_{X^nA^n}) - \rho_{X^nA'^n}\right\|_1 \leq \epsilon_n; 
\end{equation}
is said to have global error $\epsilon_n$. 
In this case, the entire state emerges close to the entire original state.  
On the other hand, under the local error criterion, a protocol satisfying 
\begin{equation}
    \left\|\tr_{\bar{X}_k\bar{B}_k}(\id_{\mathcal{H}_X}^{\otimes n}\otimes(\mathcal{D}_n\circ\mathcal{E}_n)(\rho_{X^nA^n})) - \rho_{XA'}\right\|_1 \leq \epsilon_n 
\end{equation}
for all integers $1\leq k \leq n$ is said to have local error $\epsilon_n$.  
In this case, the initial and final states should be close letter-wisely.
In these definitions, $\tr_{\bar{(\cdot)}_k}$ denotes the partial trace over all systems other than the $k$th one. 
The global error criterion implies the local error criterion, but the converse does not always hold.

We consider an asymptotic scenario where the number $n$ of states is sufficiently large. 
We call a pair $(\mathcal{E}_n,\mathcal{D}_n)$ with an encoding channel $\mathcal{E}_n$ and a decoding channel $\mathcal{D}_n$ an $(n,R_n,\epsilon_n)$ code if the protocol $\mathcal{D}_n\circ\mathcal{E}_n$ yields the rate $R_n$ within error $\epsilon_n$ under the error criterion we adopt. 
Then, we say that the rate $R$ is achievable if for any $\epsilon>0$ and $\delta>0$, 
there exists 
a positive integer $n_0$ such that for all integers $n\geq n_0$,
we can construct an $(n,R+\delta,\epsilon)$ code.
Similarly, we say that the rate $R$ is achievable within error $\epsilon$ if for any $\delta>0$, 
there exists a positive integer $n_0$ such that for all integers $n\geq n_0$,  
we can construct an $(n,R+\delta,\epsilon)$ code.  

Next, we will summarize the state-of-the-art understanding of quantum data
compression in these setups.
Quantum data compression was first studied in Refs.~\cite{Schumacher1995,Richard1994,Barnum1996}.
These ground-breaking work focused on \textit{blind} compression of \textit{pure-state ensembles} under \textit{global error criterion}. 
It was shown that the optimal compression rate is the
quantum entropy of the average state of the given pure-state ensemble;
that is, letting $\{p_a,\ket{\psi_a}\}_{a\in\Sigma}$ denote a given
ensemble, the optimal rate is given by the entropy of the average state, 
\begin{equation}
    \entropy\left(\sum_{a\in\Sigma}p_a\ket{\psi_a}\bra{\psi_a}\right). 
\end{equation}
A particularly important compression protocol,
named after its inventor as \textit{Schumacher compression}, 
was proposed in Ref.~\cite{Schumacher1995} and 
reviewed in various textbooks such as Refs.~\cite{NielsenChuang2010,Wilde2017,Watrous2018}. 

After the aforementioned initial studies of blind compression of pure states, visible compression was subsequently considered. 
For a given quantum ensemble $\Phi = \{p_a,\rho_a\}_{a\in\Sigma}$, which is not necessarily pure, the optimal rate is lower-bounded by the Holevo information~\cite{Horodecki1998,Barnum2001} defined as
\begin{equation}
    I(\Phi) \coloneqq \entropy\left(\sum_{a\in\Sigma} p_a\rho_a\right) - \sum_{a\in\Sigma} p_a\entropy(\rho_a). 
\end{equation}
Observe that for a pure-state ensemble, the Holevo information is equal to the entropy of the average state since the entropy of a pure state is zero. 
Therefore, the optimal rate of the visible compression of a pure-state ensemble is also the entropy of the average state. 
Thus, surprisingly, for compression of a pure-state ensemble under the global error criterion, whether the compression is visible or blind does not make any difference to the optimal rate. 

After the extensive investigation of compression of pure-state ensembles, 
compression of general mixed-state ensemble became of central interest. 
Historically, the optimal rate of visible compression is called the \textit{effective information}; that of blind compression is called the \textit{passive information}. 
Letting $I_{e}(\Phi)$ denote the effective information of a quantum ensemble $\Phi$ and $I_{p}(\Phi)$ denote the passive information of a quantum ensemble $\Phi$, we have the following relation: 
\begin{equation}
    I(\Phi) \leq I_e(\Phi) \leq I_p(\Phi)
\end{equation}
since visible compression can be regarded as a subclass of blind compression. 
The quantity $I_d \coloneqq I_p -I_e$ is called the \textit{information defect}, and it characterizes a difference between the visible and blind compression tasks. 
In Ref.~\cite{Horodecki1998,Barnum2001}, a lower bound of the information defect was given, and Refs.~\cite{Dur2001,Kramer2001} showed examples for which the information defect is strictly positive. 
Moreover, the optimal rate of visible compression was derived in Ref.~\cite{Horodecki2000}, and the optimal rate is given by the entropy of an extension of a given state. 
In Ref.~\cite{Hayashi2006}, the authors gave another representation of the optimal rate of the visible compression. 
On the other hand, in Ref.~\cite{Koashi2001}, the optimal rate of blind compression was studied. The main difficulty of blind compression is that the sender does not know the label of a given state. The authors proposed a protocol in which the sender only sends the essential parts of a given ensemble, and the compression rate achieved by the protocol is given by the KI decomposition~\cite{Koashi2002}. 
This optimal protocol is based on a block coding of mixed states proposed in Ref.~\cite{Lo1995}. 
In fact, this compression rate is optimal under both the global and local error criteria. 

Here, let us note that we mainly discuss quantum data compression without any assistance; however, data compression tasks with assistance, e.g., entanglement and shared randomness, have also been widely investigated~\cite{Bennett2001,Bennett2005,Abeyesinghe2006,Jain2002,Bennett2014a,Anshu2022,Khanian2020} 

\section{Compression Protocol Using Finite Local Approximations}~\label{sec:results_blind_compression}
In this section, we show our results on the quantum blind compression with finite local approximations. 
In particular, we show our novel protocol and present some examples for which the protocol leads to large reduction of the compression rate compared to the previous results with asymptotically vanishing errors. 
Moreover, as a first step of general understandings of blind compression with finite approximations, we focus on quantum ensembles consisting of two classical states. 
Through numerical experiments, we reveal the performance of our protocol for classical ensembles. 

\subsection{Asymptotic Optimal Rate of Blind Quantum Data Compression}~\label{subsec:setup_blind_compression}
Blind data compression is a quantum information processing task between two parties, the sender and the receiver, in which fundamental quantum properties emerge. 
As also mentioned in Sec.~\ref{subsec:review_compression}, 
in blind data compression, 
the sender aims to asymptotically send quantum data \textit{without knowing its actual description} to the receiver as efficiently as possible. 

\begin{figure}
\centering
\includegraphics[width = \linewidth]{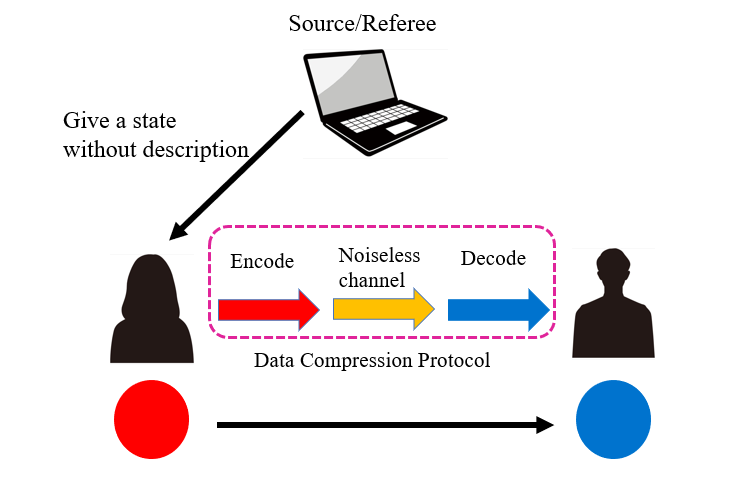}
\caption{Schematic diagram showing the setup of blind quantum data compression. In addition to the basic data compression setup, the sender cannot receive any information about the description or the label of a given state.}
\label{fig:blind_data_compression}
\end{figure}
Here we describe a formal procedure of blind quantum data compression 
following the setup shown in Sec.~\ref{subsec:review_compression}. 
Let $\{p_x,\rho_x\}_{x\in\Sigma}$ be a quantum ensemble the sender wants to send. 
The referee draws a state from the ensemble $n$ times in an independent and identically distributed (i.i.d.) manner, 
and let the resulting $n$-fold state be denoted by $\rho_{x^n}$.
Then, the referee gives the state $\rho_{x^n}$ to the sender. 
What is crucial in this setup is that the referee \textit{does not tell} the sender the label $x^n$ of the state. 
After receiving the state $\rho_{x^n}$ from the referee, the sender encodes the state by an encoding channel $\mathcal{E}_n$ and transmits the compressed state to the receiver through a noiseless quantum channel. 
The receiver decompresses the state by a decoding channel $\mathcal{D}_n$ to recover the initial state $\rho_{x_n}$. 
Note that they do not know the description of the given state while the sender has knowledge about the ensemble. 
Therefore, the compression scheme $(\mathcal{E}_n,\mathcal{D}_n)$ cannot be state-specific and must only depend on the ensemble. 

Now, we provide an overview of the achievable rates of blind data compression. 
In Ref.~\cite{Lo1995}, an achievable rate of blind state compression with a mixed state ensemble was investigated. 
It was figured out that the quantum entropy of the average state is an achievable rate even for a mixed state ensemble; 
that is, for a given ensemble $\{p_x,\rho_x\}_{x\in\Sigma}$, 
\begin{equation}\label{blind_compression_rate}
    \entropy\left(\sum_{x\in\Sigma}p_x\rho_x\right)
\end{equation}
is achievable. 
It had been open whether this rate is optimal or not~\cite{Lo1995,Horodecki1998,Barnum2001,Dur2001,Kramer2001} before Ref.~\cite{Koashi2001} conducted a more detailed analysis. 

In Ref.~\cite{Koashi2001}, the authors applied the Koashi-Imoto(KI) decomposition (Theorem~\ref{thm:KIdecomp} in Sec.~\ref{subsec:structure_quantum_ensemble})
to derive the optimal rate of blind compression. 
Recall that when an ensemble $\{p_x,\rho_x\}_x$ is given, 
we have the KI decomposition 
\begin{equation}
    \rho_x = \bigoplus_{l\in\Xi} q^{(x,l)} \rho_Q^{(x,l)} \otimes \rho_R^{(l)}. 
\end{equation}
Then, as shown in Theorem~\ref{thm:KIoperations},  
we can define quantum channels $\mathcal{K}_{\mathrm{off}}$ and $\mathcal{K}_{\mathrm{on}}$ satisfying
\begin{align}
    &\mathcal{K}_{\mathrm{off}} (\rho_x) 
    = \bigoplus_{l\in\Xi} q^{(x,l)} \rho_Q^{(x,l)},\\
    &\mathcal{K}_{\mathrm{on}} \left(\bigoplus_{l\in\Xi} q^{(x,l)} \rho_Q^{(x,l)}\right) 
    = \rho_x
\end{align}
for all $x\in\Sigma$. 
With these operations, the sender does not necessarily send the redundant parts of a target ensemble if the sender and the receiver agree on the ensemble in the scenario of blind data compression. 
If they both know the description of the ensemble, they also agree on $\mathcal{K}_{\mathrm{off}}$ and $\mathcal{K}_{\mathrm{on}}$; 
that is, they can freely take off and put on the redundant parts. 
Therefore, the sender only needs to send the classical and non-redundant quantum parts. 
By applying the argument of Ref.~\cite{Lo1995} after removing the redundant parts, 
we obtain an achievable rate 
\begin{equation}~\label{opt_blind_compression_rate}
    R^* = \entropy\left(\sum_{x\in\Sigma}p_x\bigoplus_{l\in\Xi} q^{(x,l)} \rho_Q^{(x,l)} \right).
\end{equation}
In addition, Ref.~\cite{Koashi2001} also proved that Eq.~\eqref{blind_compression_rate} is optimal even under local error criterion. 
To summarize, we have the following theorem. 
\begin{theorem}[{\cite{Koashi2001}}]
Let $\Phi = \{p_x,\rho_x\}_{x\in\Sigma}$ be a quantum ensemble with KI decomposition 
\begin{equation}
    \rho_x = \bigoplus_{l\in\Xi} q^{(x,l)} \rho_Q^{(x,l)} \otimes \rho_R^{(l)}. 
\end{equation}
Then, 
\begin{equation}
    R^* = \entropy\left(\sum_{x\in\Sigma}p_x\bigoplus_{l\in\Xi} q^{(x,l)} \rho_Q^{(x,l)} \right)
\end{equation}
is the optimal rate of blind compression of $\Phi$ for asymptotically vanishing errors under both the local and the global error criterion.  
\end{theorem}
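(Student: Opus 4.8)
The plan is to prove the two directions separately: \emph{achievability}, that the rate $R^*$ is attainable with asymptotically vanishing error, and the \emph{converse}, that no smaller rate is achievable. Since the global error criterion is stronger than the local one, I would establish achievability under the global criterion, so that the same protocol is automatically valid under the local criterion, and establish the converse under the local criterion, which is the weaker hypothesis and hence the stronger statement. Together these yield optimality under both criteria.

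For achievability I would make the sketch in the text rigorous as follows. The sender and receiver, who both know the ensemble, agree on the KI operations $\mathcal{K}_{\mathrm{off}}$ and $\mathcal{K}_{\mathrm{on}}$ of Theorem~\ref{thm:KIoperations}. The sender applies $\mathcal{K}_{\mathrm{off}}^{\otimes n}$ to strip the redundant factors, turning $\rho_{x^n}$ into the block-diagonal state $\tau_{x^n} = \bigotimes_{k} \bigoplus_{l} q^{(x_k,l)} \rho_Q^{(x_k,l)}$, whose single-copy average is $\sigma' \coloneqq \sum_x p_x \bigoplus_l q^{(x,l)} \rho_Q^{(x,l)}$ with $\entropy(\sigma') = R^*$. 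I would then invoke the block compression of Ref.~\cite{Lo1995}: project onto the $\delta$-typical subspace of $\sigma'^{\otimes n}$, whose dimension is at most $2^{n(R^*+\delta)}$, and transmit the resulting system $M_n$. Since $\tr(\Pi_n \sigma'^{\otimes n}) \geq 1-\epsilon$ for the typical projector $\Pi_n$, and $\sum_{x^n} p_{x^n} \tau_{x^n} = \sigma'^{\otimes n}$, the average weight $\sum_{x^n} p_{x^n} \tr(\Pi_n \tau_{x^n})$ is also at least $1-\epsilon$, so the gentle-measurement lemma bounds the disturbance of the projection on average over $x^n$. The receiver restores the redundant factors with $\mathcal{K}_{\mathrm{on}}^{\otimes n}$, and monotonicity of the trace norm under $\mathcal{K}_{\mathrm{on}}^{\otimes n}$ gives vanishing global error at rate $R^*+\delta$ for every $\delta>0$.

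For the converse I would first reduce to the redundancy-free case. Given any $(n,R_n,\epsilon_n)$ code $(\mathcal{E}_n,\mathcal{D}_n)$ for $\Phi$, form the induced code $\mathcal{E}_n' \coloneqq \mathcal{E}_n \circ \mathcal{K}_{\mathrm{on}}^{\otimes n}$ and $\mathcal{D}_n' \coloneqq \mathcal{K}_{\mathrm{off}}^{\otimes n} \circ \mathcal{D}_n$ for the stripped ensemble $\Phi' = \{p_x, \tau_x\}$ with average $\sigma'$. Using $\mathcal{K}_{\mathrm{off}}^{\otimes n} \mathcal{K}_{\mathrm{on}}^{\otimes n} = \id$ on the relevant support together with monotonicity of the trace norm under $\mathcal{K}_{\mathrm{off}}^{\otimes n}$, this induced code keeps the message system $M_n$, hence the same rate, and has error no larger than $\epsilon_n$; because $\mathcal{K}_{\mathrm{off}}$ and $\mathcal{K}_{\mathrm{on}}$ act letter-wise, the reduction also respects the local criterion. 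It therefore suffices to prove $R \geq \entropy(\sigma')$ for an ensemble whose KI decomposition has trivial redundant factors.

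The main obstacle is exactly this redundancy-free lower bound. The naive estimates are too weak: the bound $\log_2|M_n| \geq I(X^n:M_n)$ only yields the Holevo information $n\, I(\Phi')$, which is strictly below $n\,\entropy(\sigma')$ in general, and the output-entropy comparison $\log_2|M_n| \geq \entropy(\mathcal{D}_n(\mathcal{E}_n(\sigma'^{\otimes n})))$ is simply false, since entropy is not monotone under the decoder. The crux is thus to show that, for a redundancy-free ensemble, the decoder cannot manufacture the missing entropy of the reconstructed average state from a state-independent ancilla. To this end I would take a Stinespring dilation $\mathcal{D}_n(\cdot) = \tr_E(V_n (\cdot) V_n^\dagger)$ and argue that, as $\epsilon_n \to 0$, any portion of the output entropy contributed by the fixed ancilla would constitute a common, label-independent factor of the reconstructed states, i.e.\ genuine redundancy. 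Maximality conditions~2 and~3 of Theorem~\ref{thm:KIdecomp} forbid such a factor in a redundancy-free ensemble, forcing $\log_2|M_n| \geq n\,\entropy(\sigma') - o(n)$. Making this rigidity quantitative in the vanishing-error limit, that is, controlling how tightly an approximate ensemble-preserving channel is pinned down by the \emph{exact} KI structure, is the delicate step, and it is precisely the sensitivity that the present paper exploits once the error is kept finite rather than vanishing.
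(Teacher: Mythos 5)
Your achievability direction is essentially the paper's own route: apply $\mathcal{K}_{\mathrm{off}}^{\otimes n}$, compress the stripped ensemble by typical-subspace block coding in the style of Ref.~\cite{Lo1995} at rate $\entropy(\sigma')+\delta$, restore with $\mathcal{K}_{\mathrm{on}}^{\otimes n}$, and control the error by gentle measurement plus monotonicity of the trace norm; that part is fine and matches the sketch in Sec.~\ref{subsec:setup_blind_compression}. Your reduction of the converse to the redundancy-free case via $\mathcal{E}_n\circ\mathcal{K}_{\mathrm{on}}^{\otimes n}$ and $\mathcal{K}_{\mathrm{off}}^{\otimes n}\circ\mathcal{D}_n$ is also sound and letter-wise, so it respects the local criterion.

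The converse itself, however, has a genuine gap, and you partly acknowledge it. The mechanism you sketch --- any ancilla-supplied portion of the output entropy ``would constitute a common, label-independent factor,'' which maximality conditions~2 and~3 of Theorem~\ref{thm:KIdecomp} forbid --- does not work as stated, for two reasons. First, a label-independent contribution to entropy is not the same thing as a tensor factor: redundancy in the KI sense requires a block-wise product structure $\rho_Q^{(x,l)}\otimes\rho_R^{(l)}$, and the decoder's environment can be correlated with the signal in label-dependent ways that contribute entropy without ever factorizing, so conditions~2 and~3 have nothing to grip. Second, and more fundamentally, those maximality conditions constrain the \emph{exact} ensemble, whereas a code with error $\epsilon_n>0$ only reproduces the states approximately; applying exact maximality to approximately reconstructed states is precisely the step that fails, because the KI decomposition is discontinuous under perturbation --- this instability is the central theme of the very paper you are working in (Examples~\ref{Example1_blindcomp} and \ref{Example2_blindcomp} exhibit $\epsilon$-close ensembles whose KI structures, and hence rates, differ by $\Theta(\log N)$). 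What is actually needed, and what the paper imports from Ref.~\cite{Koashi2001}, is a \emph{quantitative} rigidity statement: the per-protocol bound $R \geq R^* - g(\Lambda_n)$, with $g(\Lambda_n)=h_2(\Delta)+\Delta\log_2(d_A-1)$ a Fano-type function of how well $\Lambda_n$ preserves the eigenbasis of the average state, together with the theorem that $f(\Lambda_n)\to 0$ forces $g(\Lambda_n)\to 0$. That asymptotic implication is exactly the ``delicate step'' you defer, it is the entire content of the converse, and your proposal gives no route to it; note also that it holds only in the vanishing-error limit and not uniformly in the dimension (errors of order $1/\sqrt{d_A}$ already break it, per Ref.~\cite{Anshu2019}), so any correct argument must use the asymptotics $\epsilon_n\to 0$ in an essential way rather than a soft structural appeal.
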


To show the optimality of the rate, the authors of Ref.~\cite{Koashi2001} introduced two error functions of compression protocol $\Lambda_n \coloneqq (\mathcal{E}_n,\mathcal{D}_n)$:
\begin{align}
    \label{eq:error_f}
    f(\Lambda_n) &\coloneqq 1 - \sum_{x^n \in \Sigma^n}p_{x^n}\fidelity(\rho_{x^n},\Lambda_n(\rho_{x^n})) \\
    \label{eq:error_g}
    g(\Lambda_n) &\coloneqq h_2(\Delta)+\Delta\log_2(d_A-1),
\end{align}
with
\begin{equation}
    \Delta \coloneqq 1 - \sum_{i=1}^{d_A} \lambda_i(\rho)\braket{i|\Lambda_n(\ket{i}\bra{i})|i}. 
\end{equation}
Here, $\fidelity(\cdot,\cdot)$ is the fidelity function, and $\rho$ is the average state of the given ensemble with the spectral decomposition 
\begin{equation}
    \rho = \sum_{i=1}^{d_A}\lambda_i(\rho)\ket{i}\bra{i}. 
\end{equation}
Moreover, $h_2(\cdot)$ is the binary entropy function defined as 
\begin{equation}
    h_2(x) \coloneqq -x\log_2x - (1-x)\log_2(1-x)
\end{equation}
for $x \in [0,1]$. 
It was shown that when the protocol $\Lambda_n$ is given, the compression rate $R$ is
\begin{equation}
    R \geq R^* - g(\Lambda_n). 
\end{equation}
The authors proved the optimality by showing that when $f(\Lambda_n)$ goes to zero, $g(\Lambda_n)$ also converges to zero; that is, for an infinitesimal error, $R^*$ is the smallest compression rate that can be achievable.  

Recently, Ref.~\cite{Khanian2020} further investigated the optimality of the compression rate under the global error criterion. 
The authors introduced an error function that represents how an error allowed in the protocol affects the compression rate, 
and they revealed properties of this error function to study how the compression rate depends on an error. 
On the other hand, in this paper, we focus on local error criterion to investigate the sensitivity of the rate against approximations. 

\subsection{Our Protocol}~\label{subsec:protocol}
In this subsection, we motivate and describe our protocol for blind quantum data compression. 
Here, we consider local error criterion; that is, we only require the resulting state to be close to the original state letter-wisely. 
In addition, we allow finite approximations to the resulting state under this error criterion. 

We first summarize  
the compression scheme from the previous section: 
\begin{enumerate}
    \item After receiving a state $\rho_{x^n}$ from the referee, the sender applies $\mathcal{K}_{\mathrm{off}}$ letter-wisely to remove the redundant parts of the state and obtain $\mathcal{K}_{\mathrm{off}}^{\otimes n}(\rho_{x^n})$. 
    \item The sender encodes the state $\mathcal{K}_{\mathrm{off}}^{\otimes n}(\rho_{x^n})$ using an encoding channel $\mathcal{E}_n$ to obtain $\mathcal{E}_n\circ\mathcal{K}_{\mathrm{off}}^{\otimes n}(\rho_{x^n})$. \item The sender sends the state $\mathcal{E}_n\circ\mathcal{K}_{\mathrm{off}}^{\otimes n}(\rho_{x^n})$ via a noiseless quantum channel to the receiver. 
    \item The receiver decodes the transmitted state using a decoding channel $\mathcal{D}_n$ and obtains $\mathcal{D}_n\circ\mathcal{E}_n\circ\mathcal{K}_{\mathrm{off}}^{\otimes n}(\rho_{x^n})$. 
    \item The receiver put the redundant parts back using the quantum channel $\mathcal{K}_{\mathrm{on}}$; the resulting state is $\mathcal{K}_{\mathrm{on}}^{\otimes n}\circ\mathcal{D}_n\circ\mathcal{E}_n\circ\mathcal{K}_{\mathrm{off}}^{\otimes n}(\rho_{x^n})$. 
\end{enumerate}
Our main idea is as follows.  
We may use quantum channels $\Lambda_s^{(n)}$ and $\Lambda_r^{(n)}$, potentially acting jointly on the entire state, instead of $\mathcal{K}_{\mathrm{off}}^{\otimes n}$ and $\mathcal{K}_{\mathrm{on}}^{\otimes n}$.  
The encoding and decoding maps $\mathcal{E}_n$ and $\mathcal{D}_n$ may change accordingly.  
When we allow a finite, local, approximation of the given state $\rho_{x^n}$ by the resulting state $\Lambda_s^{(n)}\circ \mathcal{D}_n\circ\mathcal{E}_n\circ\Lambda_r^{(n)}(\rho_{x^n})$, a smaller rate may be achieved than the previous optimal rate with a vanishing error.
We show that even if $\Lambda_s^{(n)}$ and $\Lambda_r^{(n)}$ are letter-wise approximations, 
it is plausible that a reduction of the compression rate may be observed. 

\newcounter{dum}
\newenvironment{dumlist}{\begin{list}
{\arabic{dum}.}
{\usecounter{dum}}
}
{\end{list}}

In our protocol, we use the KI operations of a \emph{different} quantum ensemble 
on the given ensemble, and prove that this approximation procedure indeed performs well and reduces the rate.  
In more detail, suppose we are 
given a quantum ensemble $\{p_x,\rho_x\}_{x\in\Sigma}$ 
and the resulting state can differ letter-wisely from the original state up to $\epsilon>0$. 
We consider all possible approximate ensembles $\{p_x,\tilde{\rho}_x\}_{x\in\Sigma}$, 
such that $\tilde{\rho}_x$ is close to $\rho_x$ in the original given ensemble for each $x$. 
Each such approximate ensemble defines KI operations $\tilde{\mathcal{K}}_{\mathrm{off}}$ and $\tilde{\mathcal{K}}_{\mathrm{on}}$ as in Theorem~\ref{thm:KIoperations}.  
Now, suppose we use these KI operations $\tilde{\mathcal{K}}_{\mathrm{off}}^{\otimes n}$ and $\tilde{\mathcal{K}}_{\mathrm{on}}^{\otimes n}$ as 
an approximation scheme $\Lambda_s^{(n)}$ and $\Lambda_r^{(n)}$ for $\{p_x,\rho_x\}_{x\in\Sigma}$.  
How much error do we incur? If the error is small, what rate can be achieved? Before we answer
these questions, let us formally define our compression protocol with local approximations. 
\begin{dumlist}
\setcounter{dum}{-1}
    \item The sender and receiver pre-agree on an approximate ensemble $\{p_x,\tilde{\rho}_x\}_{x\in\Sigma}$ of the original given ensemble $\{p_x,\rho_x\}_{x\in\Sigma}$ such that for all $x\in\Sigma$,
    \begin{equation}~\label{eq:condition_approx_protocol}
        \left\| \; \tilde{\mathcal{K}}_{\mathrm{on}}\circ\tilde{\mathcal{K}}_{\mathrm{off}}(\rho_x) - \rho_x \; \right\|_1 \leq \epsilon \,,
    \end{equation}
    
    \item The referee gives a state $\rho_{x^n}$ generated by $n$ independent and identically distributed (i.i.d.) draws from $\{p_x,\rho_x\}_{x\in\Sigma}$ to the sender. 
    The sender applies $\tilde{\mathcal{K}}_{\mathrm{off}}^{\otimes n}$ to the given state, which yields $\tilde{\mathcal{K}}_{\mathrm{off}}^{\otimes n}(\rho_{x^n})$. 
    
    \item The sender applies an encoding operation $\mathcal{E}_n$ to obtain $\mathcal{E}_n\circ\tilde{\mathcal{K}}_{\mathrm{off}}^{\otimes n}(\rho_{x^n})$. 
    
    \item The sender transmits the encoded state $\mathcal{E}_n\circ\tilde{\mathcal{K}}_{\mathrm{off}}^{\otimes n}(\rho_{x^n})$ to the receiver via a noiseless quantum channel. 
    
    \item Upon receiving the state, the receiver applies the decoding channel $\mathcal{D}_n$ corresponding to $\mathcal{E}_n$, which yields $\mathcal{D}_n\circ\mathcal{E}_n\circ\tilde{\mathcal{K}}_{\mathrm{off}}^{\otimes n}(\rho_{x^n})$. 
    
    \item Finally, the receiver applies $\tilde{\mathcal{K}}_{\mathrm{on}}^{\otimes n}$ to obtain the resulting state $\tilde{\mathcal{K}}_{\mathrm{on}}^{\otimes n}\circ\mathcal{D}_n\circ\mathcal{E}_n\circ\tilde{\mathcal{K}}_{\mathrm{off}}^{\otimes n}(\rho_{x^n})$.  
\end{dumlist}

The resulting state $\tilde{\mathcal{K}}_{\mathrm{on}}^{\otimes n}\circ\mathcal{D}_n\circ\mathcal{E}_n\circ\tilde{\mathcal{K}}_{\mathrm{off}}^{\otimes n}(\rho_{x^n})$ of our protocol should be close to the original state $\rho_{x^n}$.
In step 2 of the protocol, we choose $\mathcal{E}_n$ to achieve blind compression of the ensemble $\{p_x,\tilde{\mathcal{K}}_{\mathrm{off}}(\rho_x)\}_{x\in\Sigma}$
with \emph{vanishing error}.  
That is, for any $\delta > 0$, there exists a positive integer $n_0$ such that for all integers $n\geq n_0$,  
\begin{equation}
\left\|\mathcal{D}_n\circ\mathcal{E}_n\circ\tilde{\mathcal{K}}_{\mathrm{off}}^{\otimes n}(\rho_{x^n}) - \tilde{\mathcal{K}}_{\mathrm{off}}^{\otimes n}(\rho_{x^n})\right\|_1 \leq \delta. 
\label{eq:compvanerr}
\end{equation}
Together with condition~\eqref{eq:condition_approx_protocol}, 
local error will be $\leq \epsilon + \delta$.  
In the rest of the paper, we omit $\delta$ considering that we may take $\delta$ arbitrarily small by choosing $n$ appropriately.

In our protocol, the condition~\eqref{eq:condition_approx_protocol} is crucial. 
Here, we give a sufficient condition for Eq.~\eqref{eq:condition_approx_protocol}; we prove that $\tilde{\mathcal{K}}_{\mathrm{off}}$ and $\tilde{\mathcal{K}}_{\mathrm{on}}$ satisfy the local error criterion for $\{p_x,\rho_x\}_{x\in\Sigma}$ if $\{p_x,\tilde{\rho}_x\}_{x\in\Sigma}$ is close enough to the original ensemble. 
Suppose that  $\{p_x,\rho_x\}_{x\in\Sigma}$ is a quantum ensemble, and let $\{p_x,\tilde{\rho}_x\}_{x\in\Sigma}$ be a quantum ensemble such that 
\begin{equation}
    \|\rho_x - \tilde{\rho}_x\|_1 \leq \frac{\epsilon}{2}
\end{equation}
for all $x$. 
Then, 
\begin{widetext}
\begin{align*}
\left\|\tilde{\mathcal{K}}_{\mathrm{on}}\circ\tilde{\mathcal{K}}_{\mathrm{off}}(\rho_x) - \rho_x\right\|_1
    &= \Bigg\|\left(\tilde{\mathcal{K}}_{\mathrm{on}}\circ\tilde{\mathcal{K}}_{\mathrm{off}}(\rho_x) - \tilde{\mathcal{K}}_{\mathrm{on}}\circ\tilde{\mathcal{K}}_{\mathrm{off}}
    (\tilde{\rho}_x)\right) +\left(\tilde{\mathcal{K}}_{\mathrm{on}}\circ\tilde{\mathcal{K}}_{\mathrm{off}}(\tilde{\rho}_x) - \rho_x\right) \Bigg\|_1 \\ 
    &\leq \left\|\tilde{\mathcal{K}}_{\mathrm{on}}\circ\tilde{\mathcal{K}}_{\mathrm{off}}(\rho_x) - \tilde{\mathcal{K}}_{\mathrm{on}}\circ\tilde{\mathcal{K}}_{\mathrm{off}}(\tilde{\rho}_x)\right\|_1 + \left\|\tilde{\mathcal{K}}_{\mathrm{on}}\circ\tilde{\mathcal{K}}_{\mathrm{off}}(\tilde{\rho}_x) - \rho_x\right \|_1 &\\
    &\leq 2\|\tilde{\rho}_x - \rho_x\|_1\\
    &\leq \epsilon. 
\end{align*}
\end{widetext}
The second line follows from the triangle inequality. 
In the third line, we used the monotonicity of the trace norm and the relation  $\tilde{\mathcal{K}}_{\mathrm{on}}\circ\tilde{\mathcal{K}}_{\mathrm{off}}(\tilde{\rho}_x) = \tilde{\rho}_x$. 
Thus, 
\begin{equation}
  \|\tilde{\mathcal{K}}_{\mathrm{on}}\circ\tilde{\mathcal{K}}_{\mathrm{off}}(\rho_x) - \rho_x\|_1 \leq \epsilon; 
\end{equation}
that is, a quantum ensemble for which each state is $\epsilon/2$-close to the state in the original ensemble works as an approximate ensemble for our protocol. 

Second, we analyze the rate performance of our protocol.  
The compression rate of our protocol is determined by $\mathcal{E}_n$; a  
rate of 
\begin{equation}
    R = \entropy\left(\sum_{x\in\Sigma}p_x\tilde{\mathcal{K}}_{\mathrm{off}}(\rho_x)\right)
\end{equation}
is sufficient to ensure vanishing error condition (\ref{eq:compvanerr}).  
In step 0 of the protocol, the sender and the receiver can minimize the above 
rate by optimizing the approximate ensemble: 
    \begin{equation}
        ~~~~~\entropy\left(\sum_xp_x\tilde{\mathcal{K}}_{\mathrm{off}}(\rho_x)\right) \leq \entropy\left(\sum_xp_x\mathcal{K}_{\mathrm{off}}(\rho_x)\right).
    \end{equation}

In this protocol, we need to find a good approximate ensemble to reduce the compression rate. 
While this optimization may not be intractable, it suffices to find good enough approximate ensemble to substantially reduce the rate, as we will see in the next section. 

\subsection{Reduction of rates with a finite error}
Here, we present two main examples for which our protocol performs better than the case where finite approximations are not allowed. 
The first example is over a small dimensional system, and it illustrates the procedure and performance of our protocol in an intuitive way; the second example is over a growing dimension, and exhibits a growing reduction of the compression rate.

First, we consider a four-dimensional two-state ensemble to intuitively understand the protocol. 
\begin{example}~\label{Example1_blindcomp}
Let $\epsilon>0$ be a fixed positive number sufficiently smaller than $1/2$. 
Consider the following two density operators. 
\begin{align}
    \rho_1 &:= 
    \frac{1}{4}\left(
    \begin{array}{cccc}
    2 - \tfrac{\epsilon}{2}  & 1 - \epsilon & 0 & 0\\
    1 - \epsilon & 2 - \tfrac{3\epsilon}{2} & 0 & 0\\
    0 & 0 & \tfrac{3\epsilon}{2} & 0\\
    0 & 0 & 0 & \tfrac{\epsilon}{2}
    \end{array}
    \right),
    \\
    \rho_2 &:= 
    \frac{1}{4}\left(
    \begin{array}{cccc}
    \epsilon & 0 & 0 & 0\\
    0 & \epsilon & 0 & 0\\
    0 & 0 & 2 - \epsilon & 1\\
    0 & 0 & 1 & 2 - \epsilon
    \end{array}
    \right).
\end{align}

We consider an ensemble $\{p_x,\rho_x\}_{x=1,2}$ where $p_1 = p_2 = \tfrac{1}{2}$. 
Note that this ensemble has no redundant parts while $\rho_1$ and $\rho_2$ are written in a block-diagonal form. 
Here, we construct an approximate ensemble $\{p_x,\tilde{\rho}_x\}_{x=1,2}$ as 
\begin{align}
    \tilde{\rho}_1 &:= 
    \frac{1}{4}\left(
    \begin{array}{cccc}
    2 & 1 & 0 & 0\\
    1 & 2 & 0 & 0\\
    0 & 0 & 0 & 0\\
    0 & 0 & 0 & 0
    \end{array}
    \right),\\
    \tilde{\rho}_2 &:= 
    \frac{1}{4}\left(
    \begin{array}{cccc}
    0 & 0 & 0 & 0\\
    0 & 0 & 0 & 0\\
    0 & 0 & 2 & 1\\
    0 & 0 & 1 & 2
    \end{array}
    \right).
\end{align}
Observe that $\tilde{\rho}_1$ and $\tilde{\rho}_2$ can be written as 
\begin{align}
    \tilde{\rho}_1 &\coloneqq  \ket{0}\bra{0}\otimes
        \frac{1}{4}\left(
        \begin{array}{cc}
        2 & 1\\
        1 & 2
        \end{array}
        \right),\\
    \tilde{\rho}_2 &\coloneqq  \ket{1}\bra{1}\otimes
        \frac{1}{4}\left(
        \begin{array}{cc}
        2 & 1\\
        1 & 2
        \end{array}
        \right)
\end{align}
Therefore, 
\begin{equation}\label{eq:redundant_ex1}
    \omega \coloneqq \frac{1}{4}\left(
        \begin{array}{cc}
        2 & 1\\
        1 & 2
        \end{array}
        \right)
\end{equation}
is a redundant part of the approximate ensemble. 
Then, we can define the KI operations corresponding to this redundant part as follows. 
\begin{align}
\tilde{\mathcal{K}}_{\mathrm{off}} &(\cdot) \coloneqq \tr_{2}(\cdot),\\
\tilde{\mathcal{K}}_{\mathrm{on}} &(\cdot) \coloneqq ((\ket{0}\bra{0}\cdot\ket{0}\bra{0}) + (\ket{1}\bra{1}\cdot\ket{1}\bra{1})) \otimes \omega \,, 
\end{align}
where the partial trace $\tr_{2}$ is taken over the second qubit system.  
With these operations, 
\begin{align*}
    \tilde{\mathcal{K}}_{\mathrm{off}}(\rho_1) 
    &= \left(
        \begin{array}{cc}
        1 - \tfrac{\epsilon}{2} & 0\\
        0 & \tfrac{\epsilon}{2}
        \end{array}
        \right)
    = \left(1 - \frac{\epsilon}{2}\right)\ket{0}\bra{0} + \frac{\epsilon}{2}\ket{1}\bra{1},\\
    \tilde{\mathcal{K}}_{\mathrm{off}}(\rho_2) 
    &= \left(
        \begin{array}{cc}
        \tfrac{\epsilon}{2} & 0\\
        0 & 1 - \tfrac{\epsilon}{2}
        \end{array}
        \right)
    = \frac{\epsilon}{2}\ket{0}\bra{0} + \left(1 - \frac{\epsilon}{2}\right)\ket{1}\bra{1}.
\end{align*}
In addition, 
\begin{align*}
\tilde{\mathcal{K}}_{\mathrm{on}}\circ\tilde{\mathcal{K}}_{\mathrm{off}}(\rho_1) 
    &= \frac{1}{8}\left(
        \begin{array}{cccc}
        4 -2\epsilon & 2-\epsilon & 0 & 0\\
        2-\epsilon & 4-2\epsilon & 0 & 0\\
        0 & 0 & 2\epsilon & \epsilon\\
        0 & 0 & \epsilon & 2\epsilon
        \end{array}
        \right) \\
    &= \left(\left(1 - \frac{\epsilon}{2}\right)\ket{0}\bra{0} + \frac{\epsilon}{2}\ket{1}\bra{1}\right)\otimes \omega, \\
    \tilde{\mathcal{K}}_{\mathrm{on}}\circ\tilde{\mathcal{K}}_{\mathrm{off}}(\rho_2) 
    &= \frac{1}{8}\left(
    \begin{array}{cccc}
    2\epsilon & \epsilon & 0 & 0\\
    \epsilon & 2\epsilon & 0 & 0\\
    0 & 0 & 4 - 2\epsilon & 2-\epsilon\\
    0 & 0 & 2 - \epsilon & 4 - 2\epsilon
    \end{array}
    \right)\\
    &= \left(\frac{\epsilon}{2}\ket{0}\bra{0} + \left(1 - \frac{\epsilon}{2}\right)\ket{1}\bra{1}\right)\otimes \omega.
\end{align*}
Hence, 
\begin{align*}
    \|\tilde{\mathcal{K}}_{\mathrm{on}}\circ\tilde{\mathcal{K}}_{\mathrm{off}}(\rho_1) - \rho_1\|_1 
    &= \left\|
    \frac{1}{8}\left(
    \begin{array}{cccc}
    -\epsilon & \epsilon & 0 & 0 \\
    \epsilon & \epsilon & 0 & 0 \\
    0 & 0 & -\epsilon & \epsilon\\
    0 & 0 & \epsilon & \epsilon 
    \end{array}
    \right)
    \right\|_1 \\ 
    &= \frac{\sqrt{2}\epsilon}{2} \leq \epsilon,\\
    \|\tilde{\mathcal{K}}_{\mathrm{on}}\circ\tilde{\mathcal{K}}_{\mathrm{off}}(\rho_2) - \rho_2\|_1 
    &= \left\|
    \frac{1}{8}\left(
    \begin{array}{cccc}
    0 & \epsilon & 0 & 0 \\
    \epsilon & 0 & 0 & 0 \\
    0 & 0 & 0 & -\epsilon\\
    0 & 0 & -\epsilon & 0 
    \end{array}
    \right)
    \right\|_1 \\
    &= \frac{\epsilon}{2} \leq \epsilon. 
\end{align*}
Therefore, the KI operations $\tilde{\mathcal{K}}_{\mathrm{off}}$ and $\tilde{\mathcal{K}}_{\mathrm{on}}$ satisfy 
the condition~\eqref{eq:condition_approx_protocol}. 

We now see that this example exhibits a reduction of the rate compared to the compression rate under asymptotically vanishing errors. 
Let $R_0$ denote the optimal rate for $\{p_x,\rho_x\}_{x\in\Sigma}$; let $R$ denote the rate for $\{p_x,\rho_x\}_{x\in\Sigma}$ obtained by $\tilde{\mathcal{K}}_{\mathrm{off}}$ and $\tilde{\mathcal{K}}_{\mathrm{on}}$. 
Then, we have 
\begin{align}
    R_0 &= \entropy\left(\frac{1}{2}\rho_1 + \frac{1}{2}\rho_2\right) \approx \log_24 = 2 ,\\
    R &= \entropy\left(\frac{1}{2}\tilde{\mathcal{K}}_{\mathrm{off}}(\rho_1) + \frac{1}{2}\tilde{\mathcal{K}}_{\mathrm{off}}(\rho_2)\right) \approx \log_22 = 1. 
\end{align}
Thus, this approximation reduces the compression rate to about half of the original rate. 
\end{example}

Next, we show an example for which a finite approximation dramatically changes the KI structure of a given ensemble.
For this example, we can see a large reduction of the compression rate compared with the compression rate under asymptotically vanishing errors. 
\begin{example}~\label{Example2_blindcomp}
Let $0 < \epsilon < 1/2$ be a fixed positive number. 
Let $\omega_a \in \mathcal{D}(\mathcal{H}_a)$ and $\omega_b \in \mathcal{D}(\mathcal{H}_b)$ be density operators. 
Let $N$ be a positive integer, and define $2N$-dimensional density operators
\begin{align*}
    \sigma_1 &\coloneqq 
    \frac{1}{4N}\left(
    \begin{array}{ccccc}
    2 & \epsilon \mathrm{e}^{i\alpha} & 0 & \cdots & \epsilon \mathrm{e}^{-i\alpha} \\
    \epsilon\mathrm{e}^{-i\alpha} & 2 & \epsilon\mathrm{e}^{i\alpha} & \cdots & 0 \\
    \vdots &\ddots & \ddots & \ddots & \vdots\\
    0 & \cdots & \epsilon\mathrm{e}^{-i\alpha}& 2 & \epsilon\mathrm{e}^{i\alpha} \\
    \epsilon\mathrm{e}^{i\alpha} & 0 & \cdots & \epsilon\mathrm{e}^{-i\alpha} & 2 
    \end{array}
    \right)
    \\
    \sigma_2 &\coloneqq 
    \frac{1}{4N}\left(
    \begin{array}{cccccc}
    1+2\epsilon &        &   &   &        &   \\
      & \ddots &   &   &        &   \\
      &        & 1 + 2\epsilon&   &        &   \\
      &        &   & 3 -2\epsilon&        &   \\
      &        &   &   & \ddots &   \\
      &        &   &   &        & 3-2\epsilon  
    \end{array}
    \right)
\end{align*}
with $0 < \alpha < 1/(4N)$. 
Here, the off-diagonal elements of $\sigma_2$ are all zero, and we omit writing these elements in the matrix form above. 
The states $\sigma_1$ and $\sigma_2$ do not have redundant parts. 
We show the proof in Appendix~\ref{appendix:example2}. 

With these density operators, let us define a quantum ensemble $\{p_x,\rho_x\}_{x=1,2}$ with 
\begin{align}
    &p_1 = p_2 = \frac{1}{2},\\
    \label{eq:KI_ex2_1}
    &\rho_1 \coloneqq \frac{1}{3}\omega_a \oplus \frac{1}{3}\sigma_1 \oplus \frac{1}{3}\omega_b, \\
    \label{eq:KI_ex2_2}
    &\rho_2 \coloneqq \frac{1}{6}\omega_a \oplus \frac{1}{3}\sigma_2 \oplus \frac{1}{2}\omega_b, 
\end{align}
where $\omega_a$ and $\omega_b$ are density operators so that the KI decomposition of $\rho_1$ and $\rho_2$ is given by Eqs.~\eqref{eq:KI_ex2_1} and \eqref{eq:KI_ex2_2}. 
Note that $\rho_1$ and $\rho_2$ already have redundant parts $\omega_a$ and $\omega_b$. 
Now, we introduce an approximate ensemble of $\{p_x,\tilde{\rho}_x\}_{x=1,2}$ with 
\begin{align}
    \tilde{\rho}_1 &\coloneqq \frac{1}{3}\omega_a \oplus \frac{1}{3}\tilde{\sigma_1} \oplus \frac{1}{3}\omega_b, \\
    \tilde{\rho}_2 &\coloneqq \frac{1}{6}\omega_a \oplus \frac{1}{3}\tilde{\sigma}_2 \oplus \frac{1}{2}\omega_b,
\end{align}
where 
\begin{align}
    \tilde{\sigma}_1 &\coloneqq 
    \frac{1}{2N}\left(
    \begin{array}{ccccc}
    1 & 0 & 0 & \cdots & 0\\
    0 & 1 & 0 & \cdots & 0\\
    \vdots &\ddots & \ddots & \ddots & \vdots\\
    0 & \cdots & 0& 1 & 0 \\
    0 & 0 & \cdots & 0 & 1
    \end{array}
    \right),
    \\
    \tilde{\sigma}_2 &\coloneqq 
    \frac{1}{4N}\left(
    \begin{array}{cccccc}
    1 &        &   &   &        &   \\
      & \ddots &   &   &        &   \\
      &        & 1&   &        &   \\
      &        &   & 3&        &   \\
      &        &   &   & \ddots &   \\
      &        &   &   &        & 3  
    \end{array}
    \right).
\end{align}
Then, $\tilde{\rho}_1$ and $\tilde{\rho}_2$ can be written as
\begin{align}
    \tilde{\rho}_1 &= \frac{1}{2} \tilde{\omega}_a \oplus \frac{1}{2} \tilde{\omega}_b, \\
    \tilde{\rho}_2 &= \frac{1}{4} \tilde{\omega}_a \oplus \frac{3}{4} \tilde{\omega}_b,  
\end{align}
where
\begin{align}
    \label{eq:redundant_ex2_1}
    \tilde{\omega}_a &\coloneqq \frac{2}{3}\left(\omega_a \oplus \frac{1}{2N}I\right),\\
    \label{eq:redundant_ex2_2}
    \tilde{\omega}_b &\coloneqq \frac{2}{3} \left(\frac{1}{2N}I\oplus\omega_b\right)
\end{align}
with the $N$-dimensional identity operator $I$. 
Note that the approximation vastly changes the structure of the KI decomposition. 
The original KI decomposition of $\{p_x,\rho_x\}_{x=1,2}$ consists of three classical parts. 
However, after the approximation, 
the KI decomposition of $\{p_x,\tilde{\rho}_x\}_{x=1,2}$ consists of two large blocks 
while the approximation changes the states only by $\epsilon$. 
Then, considering the KI operations $\tilde{\mathcal{K}}_{\mathrm{off}}$ and $\tilde{\mathcal{K}}_{\mathrm{on}}$ corresponding to this structure, we have
\begin{align}
    \tilde{\mathcal{K}}_{\mathrm{off}}(\rho_1)
    &= \frac{1}{2}
    \begin{pmatrix}
    1 & 0\\
    0 & 1\\
    \end{pmatrix},\\
    \tilde{\mathcal{K}}_{\mathrm{off}}(\rho_2)
    &= \frac{1}{12}
    \begin{pmatrix}
    3+2\epsilon & 0\\
    0 & 9-2\epsilon\\
    \end{pmatrix},
\end{align}
and 
\begin{align}
     \tilde{\mathcal{K}}_{\mathrm{on}}\circ\tilde{\mathcal{K}}_{\mathrm{off}}(\rho_1) &= \frac{1}{2} \tilde{\omega}_a \oplus \frac{1}{2} \tilde{\omega}_b = \tilde{\rho}_1 \\
    \tilde{\mathcal{K}}_{\mathrm{on}}\circ\tilde{\mathcal{K}}_{\mathrm{off}}(\rho_2) &= \left(\frac{1}{4}+\frac{\epsilon}{6}\right) \tilde{\omega}_a \oplus \left(\frac{3}{4}-\frac{\epsilon}{6}\right) \tilde{\omega}_b. 
\end{align}
Therefore, it holds that
\begin{align*}
    &\left\|\tilde{\mathcal{K}}_{\mathrm{on}}\circ\tilde{\mathcal{K}}_{\mathrm{off}}(\rho_1) - \rho_1\right\|_1 \leq \frac{\epsilon}{2N}\sum_{j = 0}^{2N-1} \left|\cos\left(\frac{2\pi j}{2N} + \alpha\right)\right| \leq \epsilon,\\
    &\left\|\tilde{\mathcal{K}}_{\mathrm{on}}\circ\tilde{\mathcal{K}}_{\mathrm{off}}(\rho_2) - \rho_2\right\|_1 = \frac{4\epsilon}{9} \leq \epsilon,
\end{align*}
implying that this pair of KI operations yields a compression protocol within a finite error $\epsilon$.   

Letting $R_0$ denote the optimal rate for $\{p_x,\rho_x\}_x$ without any approximations and $R$ denote the rate for $\{p_x,\rho_x\}_x$ obtained by $\tilde{\mathcal{K}}_{\mathrm{off}}$ and $\tilde{\mathcal{K}}_{\mathrm{on}}$, 
we have 
\begin{align}
    R_0 &= \entropy\left(\frac{1}{2}\mathcal{K}_{\mathrm{off}}(\rho_1) + \frac{1}{2}\mathcal{K}_{\mathrm{off}}(\rho_2)\right) \gtrsim \log_2 N ,\\
    R &= \entropy\left(\frac{1}{2}\tilde{\mathcal{K}}_{\mathrm{off}}(\rho_1) + \frac{1}{2}\tilde{\mathcal{K}}_{\mathrm{off}}(\rho_2)\right) \leq \log_22 = 1 ,
\end{align}
which shows the reduction of the compression rate to a constant rate, which is independent of the size of the system. 
\end{example} 

Thus, our compression protocol achieves a better compression rate 
by artificially generating approximate redundant parts. 
In the first example, we can generate a redundant part $\omega$ defined in Eq.~\eqref{eq:redundant_ex1}; 
in the second example, we can generate large redundant parts $\tilde{\omega}_a$ and $\tilde{\omega}_b$ shown in Eqs.~\eqref{eq:redundant_ex2_1} and \eqref{eq:redundant_ex2_2}. 
Our protocol has a much better rate when a quantum ensemble has an approximate ensemble with large redundant parts. 
By not sending these approximate redundant parts, we can achieve significantly small compression rates. 
Remarkably, in Example~\ref{Example2_blindcomp}, we can see that even a small approximation leads to a compression rate independent of the dimension of the system. 

Despite these examples showing large reductions in rates of blind compression, we would like to remark that we do not necessarily find a good approximation. 
For example, if an allowed error is much smaller than $\epsilon$ in Examples~\ref{Example1_blindcomp} and \ref{Example2_blindcomp}, we cannot apply the same approximation anymore. 
To further advance the study of blind quantum compression with finite approximations, a general investigation of conditions under which we can successfully find a good approximate ensemble of a given ensemble is needed. 
It would also be interesting to study a more general compression protocol for blind quantum compression with finite approximations. 

\begin{remark}
Let us illustrate the significance of the local error criterion in our examples. 
In this remark, we consider the fidelity 
$\fidelity(\rho,\sigma) \coloneqq \left\|\sqrt{\rho}\sqrt{\sigma}\right\|_1$ as a measure of accuracy instead of the trace norm to make the analysis easier. 
Note that these two measures are equivalent in the sense of the Fuchs-van de Graaf inequality~\cite{Watrous2018} 
\begin{equation}~\label{eq:van_de_graaf}
    1 - \frac{1}{2}\|\rho - \sigma\|_1 \leq \fidelity(\rho,\sigma) \leq \sqrt{1 - \frac{1}{4}\|\rho - \sigma\|_1^2}; 
\end{equation}
in particular, $\|\rho - \sigma\|_1 = 2$ if and only if $\fidelity(\rho,\sigma) = 0$, and $\|\rho - \sigma\|_1 = 0$ if and only if $\fidelity(\rho,\sigma) = 1$. 

In both of our examples, the resulting states $\tilde{\mathcal{K}}_{\mathrm{on}}\circ\tilde{\mathcal{K}}_{\mathrm{off}}(\rho_1)$ and $\tilde{\mathcal{K}}_{\mathrm{on}}\circ\tilde{\mathcal{K}}_{\mathrm{off}}(\rho_2)$ satisfy 
\begin{align}
    \fidelity(\rho_i, \tilde{\mathcal{K}}_{\mathrm{on}}\circ\tilde{\mathcal{K}}_{\mathrm{off}}(\rho_i)) \leq \sqrt{1 - O(\epsilon^2)}. 
\end{align}
Since the fidelity is multiplicative to product-state inputs~\cite{Watrous2018}, 
for any $i_1,\ldots,i_n \in \{1,2\}$, 
\begin{equation*}
\begin{aligned}
    &\fidelity(\rho_{i_1}\otimes \cdots \otimes \rho_{i_n}, \tilde{\mathcal{K}}_{\mathrm{on}}\circ\tilde{\mathcal{K}}_{\mathrm{off}}(\rho_{i_1}) \otimes \cdots \otimes \tilde{\mathcal{K}}_{\mathrm{on}}\circ\tilde{\mathcal{K}}_{\mathrm{off}}(\rho_{i_n})) \\ 
    &= \prod_{k=1}^n \fidelity(\rho_{i_k}, \tilde{\mathcal{K}}_{\mathrm{on}}\circ\tilde{\mathcal{K}}_{\mathrm{off}}(\rho_{i_k})) \\ 
    &\leq \prod_{k=1}^n \sqrt{1 - O(\epsilon^2)} 
    = \left(\sqrt{1 - O(\epsilon^2)}\right)^n 
    \xrightarrow{n\to\infty} 0. 
\end{aligned}
\end{equation*}
Thus, our protocol is asymptotically inaccurate in terms of the global error criterion though the local error is small even when $n\to\infty$. 
We believe that the rate reduction we achieved is realized by sacrificing the global accuracy. 
This intuition is closely related to the problem of \textit{strong converse}~\cite{Winter1999}. 
The strong converse of the blind data compression of mixed states is still largely unexplored, and we do not discuss this topic in this paper. 
\end{remark}

\subsection{Approximation of Classical Ensembles}~\label{subsec:approximation_classical_ensemble}
In this section, we discuss compression of classical ensembles under finite approximations, aiming to obtain insight for a more general setting. 
Here, to see properties of blind compression of classical ensembles concisely, we consider classical ensembles consisting of two states. 
We propose two approximation methods for a two-state classical ensemble, namely, the \textit{arithmetic mean method} and the \textit{geometric mean method}. 

As seen in Definition~\ref{def:classical_ensemble}, a classical ensemble is associated with a fixed basis in which all the states in the ensemble are diagonal.  
Indeed, when we approximate a classical ensemble with respect to the fixed basis, we only have to consider an approximation of diagonal elements. 
See Appendix~\ref{appendix:classical_ensemble} for the details. 
Therefore, when a basis is fixed, an approximation of a given classical state is equivalent to an approximation of a probability distribution corresponding to the classical state. 
To consider an approximation of a probability distribution, \textit{binning} is a useful method, where we divide a given probability distribution into several groups (bins) and replace each probability with the average value of the bin the probability belongs to. 

\begin{definition}[Binning]
Let $\{p_a:a\in\Sigma\}$ be a probability distribution. 
Consider a partition of an alphabet $\Sigma$ into $m$ disjoint constituent alphabets $\Sigma_k$, 
\begin{equation}\label{eq:bin_disjoint_union}
    \Sigma = \bigcup_{k=1}^m \Sigma_k \,,
\end{equation}
\begin{equation}
    \Sigma_k \cap \Sigma_l = \emptyset
\end{equation}
for all $k\neq l$. 
Then, we call each $\Sigma_k$ a \textit{bin} and $m$ the number of bins. 
Consider a probability distribution on $\Sigma$, $\{p_a:a\in\Sigma\}$.
For each bin $\Sigma_k$, let $p^{(k)}$ be the average value of the set $\{p_a:a\in\Sigma_k\}$. 
Then, we can construct a new probability distribution $\{p'_a:a\in\Sigma\}$ by replacing $p_a$ with $p'_a = p^{(k)}$ when $a\in\Sigma_k$. 
This method for generating a probability distribution $\{p'_a:a\in\Sigma\}$ from a given distribution $\{p_a:a\in\Sigma\}$ is called \textit{binning}.
\end{definition}
When probabilities within a bin $\{p_a:a\in\Sigma_k\}$ are all close to each other, binning gives an approximation of probability distributions; that is, it is also regarded as a good approximation method of classical states. 

We propose two methods to approximate a given ensemble based on binning. 
Here, we further specialize from a general two-state classical ensemble to 
one in which one of the two states is the flat state, motivated by the setup discussed in Ref.~\cite{Anshu2022}.

Suppose $\mathcal{H}$ is a quantum system, and consider two  classical states $\{\rho,\sigma\}$ where $\rho$ is the flat state. 
Take an orthonormal basis $\{\ket{i}\in\mathcal{H}:1\leq i \leq d_{\mathcal{H}}\}$ of $\mathcal{H}$ so that we can write
\begin{align}
    \rho &= \sum_{i=1}^{d_{\mathcal{H}}} \frac{1}{d_{\mathcal{H}}}\proj{i},\\
    \sigma &= \sum_{i=1}^{d_{\mathcal{H}}} p_i\proj{i},  
\end{align}
where $\{p_i : 1\leq i\leq d_{\mathcal{H}}\}$ forms a probability distribution with $p_1\geq p_2 \geq \cdots \geq p_{d_{\mathcal{H}}}$. 
We consider an approximation of $\{p_i : 1\leq i\leq d_{\mathcal{H}}\}$ within an allowed error $\epsilon>0$, 
that is, a probability distribution $\{p_i':1\leq i\leq d_{\mathcal{H}}\}$ such that 
\begin{equation}
    \label{eq:allowed_error}
    \sum_{i=1}^{d_{\mathcal{H}}} |p_i-p_i'| \leq \epsilon. 
\end{equation}
To construct an approximation leading to large redundant parts, we propose the following two methods, namely, the \textit{arithmetic mean method} and the \textit{geometric mean method}. 

\begin{itemize}
    \item Arithmetic Mean Method
    
    \begin{enumerate}
        \item First, for given $\epsilon>0$, find the largest positive integer $k_1$ such that 
        \begin{equation}
            \left|p_1 - p_{k_1}\right| \leq \frac{\epsilon}{d_{\mathcal{H}}}. 
        \end{equation}
        Define a set of positive integers $I_1 \coloneqq \{1,\ldots,k_1\}$. 
        \item For $i > 1$, find the largest positive integer $k_i$ such that
        \begin{equation}~\label{eq:arithmetic_mean_bin}
            \left|p_{k_{i-1}+1} - p_{k_{i}}\right| \leq \frac{\epsilon}{d_{\mathcal{H}}}.
        \end{equation}
        Define a set of positive integers $I_i \coloneqq \{k_{i-1}+1,\ldots,k_i\}$. 
        \item Repeat Step 2 until we find a positive integer $L$ such that $k_L = d_{\mathcal{H}}$. 
        \item For $j \in I_i$ with some $1\leq i \leq L$, replace $p_j$ as
        \begin{equation}
            p_j \mapsto p_j^{(A)} \coloneqq \frac{1}{|I_i|}\sum_{m \in I_i} p_m, 
        \end{equation}
        and define
        \begin{equation}
            \sigma^{(A)} \coloneqq \sum_{i=1}^{d_{\mathcal{H}}} p_i^{(A)} \proj{i}
        \end{equation}
    \end{enumerate}
    
    \item Geometric Mean Method
    
    \begin{enumerate}
        \item First, for given $\epsilon>0$, find the largest positive integer $k_1$ such that 
        \begin{equation}
            \frac{p_{k_1}}{p_1} \geq \frac{1}{1+\epsilon}. 
        \end{equation}
        Define a set of positive integers $I_1 \coloneqq \{1,\ldots,k_1\}$. 
        \item For $i>1$, find the largest positive integer $k_i$ such that
        \begin{equation}~\label{eq:geometric_mean_bin}
            \frac{p_{k_i}}{p_{k_{i-1}+1}} \geq \frac{1}{1+\epsilon}. 
        \end{equation}
        Define a set of positive integers $I_i \coloneqq \{k_{i-1}+1,\ldots,k_i\}$. 
        \item Repeat Step 2 until we find a positive integer $L$ such that $k_L = d_{\mathcal{H}}$. 
        \item For $j \in I_i$ with some $1\leq i \leq l$, replace $p_j$ as
        \begin{equation}
            p_j \mapsto p_j^{(G)} \coloneqq \frac{1}{|I_i|}\sum_{m \in I_i} p_m, 
        \end{equation}
        and define
        \begin{equation}
            \sigma^{(G)} \coloneqq \sum_{i=1}^{d_{\mathcal{H}}} p_i^{(G)}\proj{i}. 
        \end{equation}
    \end{enumerate}
\end{itemize}

We show that these approximation methods are valid; that is, the resulting states $\sigma^{(A)}$ and $\sigma^{(G)}$ are close enough to the original state $\sigma$. 
The following proposition guarantees that $\{\rho,\sigma'\}$ with $\sigma'=\sigma^{(A)}\,\mathrm{or}\,\sigma^{(G)}$ is considered as an approximation of $\{\rho,\sigma\}$. 
\begin{proposition}
For a positive real number $\epsilon > 0$, let $\sigma^{(A)}$ and $\sigma^{(G)}$ be the approximate states of a given diagonal state $\sigma$ by the arithmetic mean method and the geometric mean method respectively. 
Then, it holds that 
\begin{align}
    \|\sigma^{(A)} - \sigma\|_1 &\leq \epsilon,\\
    \|\sigma^{(G)} - \sigma\|_1 &\leq \epsilon. 
\end{align}
\end{proposition}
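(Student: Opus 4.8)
The plan is to first reduce the trace-norm estimate to an $\ell_1$ estimate on probability vectors, and then control the error of each method bin by bin. Since $\sigma$, $\sigma^{(A)}$, and $\sigma^{(G)}$ are all diagonal in the same orthonormal basis $\{\ket{i}\}$, the difference $\sigma'-\sigma$ (with $\sigma'=\sigma^{(A)}$ or $\sigma^{(G)}$, and corresponding probabilities $p_i'$) is itself diagonal, so its trace norm is just the sum of the moduli of its eigenvalues:
\begin{equation}
    \|\sigma' - \sigma\|_1 = \sum_{i=1}^{d_{\mathcal{H}}} |p_i' - p_i|.
\end{equation}
This is exactly the reduction established for classical ensembles (the appendix on classical ensembles), and it means it suffices to bound the $\ell_1$ distance between the original distribution $\{p_i\}$ and its binned version.

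The key preliminary observation I would record is that in both methods the bins $I_i = \{k_{i-1}+1,\ldots,k_i\}$ (with $k_0=0$) partition $\{1,\ldots,d_{\mathcal{H}}\}$, and because the probabilities are sorted as $p_1 \geq \cdots \geq p_{d_{\mathcal{H}}}$, for every $j \in I_i$ both the original value $p_j$ and the bin average $p_j' = \tfrac{1}{|I_i|}\sum_{m\in I_i} p_m$ lie in the interval $[p_{k_i},\,p_{k_{i-1}+1}]$. Consequently the per-index error never exceeds the bin width:
\begin{equation}
    |p_j' - p_j| \leq p_{k_{i-1}+1} - p_{k_i}.
\end{equation}
Everything then follows from how each method bounds this width.

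For the arithmetic mean method I would simply invoke the defining condition \eqref{eq:arithmetic_mean_bin}, which forces $p_{k_{i-1}+1}-p_{k_i} \leq \epsilon/d_{\mathcal{H}}$ for every bin. Hence every one of the $d_{\mathcal{H}}$ indices contributes at most $\epsilon/d_{\mathcal{H}}$, and summing gives $\sum_i |p_i^{(A)}-p_i| \leq d_{\mathcal{H}}\cdot (\epsilon/d_{\mathcal{H}}) = \epsilon$. This case is essentially immediate once the interval observation is in place.

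The geometric mean method is where the work lies and is the step I expect to be the main obstacle, because its defining condition \eqref{eq:geometric_mean_bin} is \emph{multiplicative}: it only gives $p_{k_{i-1}+1} \leq (1+\epsilon)\,p_{k_i}$, so the bin width is $p_{k_{i-1}+1}-p_{k_i} \leq \epsilon\, p_{k_i}$, which can be large for bins carrying large probabilities. The uniform additive bound $\epsilon/d_{\mathcal{H}}$ is therefore unavailable, and I would instead trade it for a bound proportional to the bin's probability mass. Using $p_{k_i} \leq p_j$ for all $j \in I_i$,
\begin{equation}
    \sum_{j\in I_i} |p_j^{(G)} - p_j| \leq |I_i|\,\epsilon\, p_{k_i} \leq \epsilon \sum_{j\in I_i} p_j.
\end{equation}
Summing over all bins and using $\sum_j p_j = 1$ then yields $\sum_i |p_i^{(G)}-p_i| \leq \epsilon$, completing the argument. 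The delicate point to get right is precisely this replacement of a uniform per-element bound by a mass-proportional one, which is what lets the normalization of the distribution absorb the multiplicative slack.
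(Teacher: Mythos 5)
Your proof is correct and follows essentially the same route as the paper's: reduce the trace norm to the $\ell_1$ distance of the diagonal probabilities, bound the arithmetic case by the uniform bin width $\epsilon/d_{\mathcal{H}}$ summed over all $d_{\mathcal{H}}$ indices, and absorb the multiplicative slack of the geometric case through the normalization of the distribution. The only cosmetic difference is in the geometric step, where you bound $|p_j^{(G)}-p_j|\leq\epsilon\,p_{k_i}\leq\epsilon\,p_j$ and sum against the original distribution, while the paper writes the error as $p_j^{(G)}\bigl|p_j/p_j^{(G)}-1\bigr|\leq\epsilon\,p_j^{(G)}$ and sums against the binned one --- the same mass-proportional estimate in two equivalent guises.
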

\begin{proof}
It holds that
\begin{equation}
    \begin{aligned}
    \|\sigma - \sigma^{(A)}\|_1
    &= \sum_{i = 1}^{d_{\mathcal{H}}} |p_i - p_i^{(A)}|\\
    &= \sum_{i = 1}^{L}\sum_{j \in I_i} |p_j - p_j^{(A)}|\\
    &\leq \sum_{i = 1}^{L}\sum_{j \in I_i} |p_{k_{i-1}+1} - p_{k_i}|\\
    &\leq \sum_{i = 1}^{L}\sum_{j \in I_i} \frac{\epsilon}{d_{\mathcal{H}}}\\
    &= \epsilon. 
    \end{aligned}
\end{equation}
The first inequality follows because the difference $|p_j - p_j^{(A)}|$ is upper-bounded by the difference between the largest value and the smallest value of the set $\{p_i : i\in I_j\}$, which is given by $|p_{k_{i-1}+1} - p_{k_i}|$.
The second inequality follows from Eq.~\eqref{eq:arithmetic_mean_bin}.

On the other hand, it holds that
\begin{equation}~\label{eq:geometric_1}
    \begin{aligned}
     \|\sigma - \sigma^{(G)}\|_1
    &= \sum_{i = 1}^{d_{\mathcal{H}}} |p_i - p_i^{(G)}|\\
    &= \sum_{i = 1}^{L}\sum_{j \in I_i} |p_j - p_j^{(G)}|\\
    &= \sum_{i = 1}^{L}\sum_{j \in I_i} p_j^{(G)}\left|\frac{p_j}{p_j^{(G)}} - 1\right|.
    \end{aligned}
\end{equation}
Here, by Eq.~\eqref{eq:geometric_mean_bin}, we have
\begin{equation}
     \frac{1}{1+\epsilon}\leq \frac{p_j}{p_j^{(G)}} \leq 1+\epsilon, 
\end{equation}
which implies that 
\begin{equation}~\label{eq:geometric_2}
     -\epsilon \leq \frac{-\epsilon}{1+\epsilon}\leq \frac{p_j}{p_j^{(G)}} - 1 \leq \epsilon, 
\end{equation}
Combining Eqs.~\eqref{eq:geometric_1} and \eqref{eq:geometric_2}, we have that 
\begin{equation}
    \begin{aligned}
     \|\sigma - \sigma^{(G)}\|_1
    &= \sum_{i = 1}^{L}\sum_{j \in I_i} p_j^{(G)}\left|\frac{p_j}{p_j^{(G)}} - 1\right|\\
    &\leq \sum_{i = 1}^{L}\sum_{j \in I_i} p_j^{(G)} \epsilon\\
    &= \epsilon.  
    \end{aligned}
\end{equation}
\end{proof}

Consider a quantum ensemble formed by $\{\rho,\sigma'\}$ ($\sigma'=\sigma^{(A)}\,\mathrm{or}\,\sigma^{(G)}$) and quantum channels $\mathcal{K}_{\mathrm{off}}$ and $\mathcal{K}_{\mathrm{on}}$ with respect to this ensemble. 
Since
\begin{align}
    \mathcal{K}_{\mathrm{off}}(\rho) 
    &= \sum_{i=1}^{L} \frac{|I_i|}{d_{\mathcal{H}}}\proj{i},\\
    \mathcal{K}_{\mathrm{off}}(\sigma) 
    &= \sum_{i=1}^{L} \left(\sum_{m\in I_i}p_m\right)\proj{i}, 
\end{align}
the rate $R$ of our protocol for this ensemble is
\begin{equation}~\label{eq:classical_upperbound}
    R = \entropy\left(p_{\rho}\mathcal{K}_{\mathrm{off}}(\rho) +p_{\sigma}\mathcal{K}_{\mathrm{off}}(\sigma) \right)
    \leq \log_2L. 
\end{equation}
In addition, by construction, 
\begin{align}
     \mathcal{K}_{\mathrm{on}}\circ \mathcal{K}_{\mathrm{off}}(\rho) 
    &= \rho,\\
     \mathcal{K}_{\mathrm{on}}\circ \mathcal{K}_{\mathrm{off}}(\sigma) 
    &= \sigma'.  
\end{align}
Thus, this approximation satisfies the condition~\eqref{eq:condition_approx_protocol}, namely, 
\begin{align}
    \| \mathcal{K}_{\mathrm{on}}\circ \mathcal{K}_{\mathrm{off}}(\rho) - \rho\|_1 &= \|\rho - \rho\|_1  = 0 \leq \epsilon, \\
    \| \mathcal{K}_{\mathrm{on}}\circ \mathcal{K}_{\mathrm{off}}(\sigma) - \sigma\|_1 &= \|\sigma' - \sigma\|_1 \leq \epsilon. 
\end{align}
While analytical discussion of these two methods can be complicated, 
in the following subsection, 
we conduct numerical experiments to investigate 
the performance of these approximation methods.

\subsection{Numerical Simulation for Approximating Classical Ensemble}
In this section, we perform numerical simulations to verify and compare the performance of the two methods, the \textit{arithmetic mean method} and the \textit{geometric mean method}. 
In Ref.~\cite{Anshu2022}, the authors proposed a protocol for approximately compressing a two-state classical ensemble, 
which employs the geometric mean. 
One may wonder if the arithmetic mean yields better results. 

We conduct two types of simulations. 
In the first one, we fix the dimension of the quantum system and vary 
the allowed approximation $\epsilon$. 
Then, we observe how the compression rates obtained by these approximation methods depend on $\epsilon$. 
In the second one, we change the dimension of quantum system $d_{\mathcal{H}}$, and we set $\epsilon = 1/\sqrt{d_{\mathcal{H}}}$ for each $d_{\mathcal{H}}$. 
Then, we see the dependence of the compression rates on the dimension of a given system. 

\begin{figure*}[htbp]
        \centering
        \includegraphics[keepaspectratio, width=\linewidth]{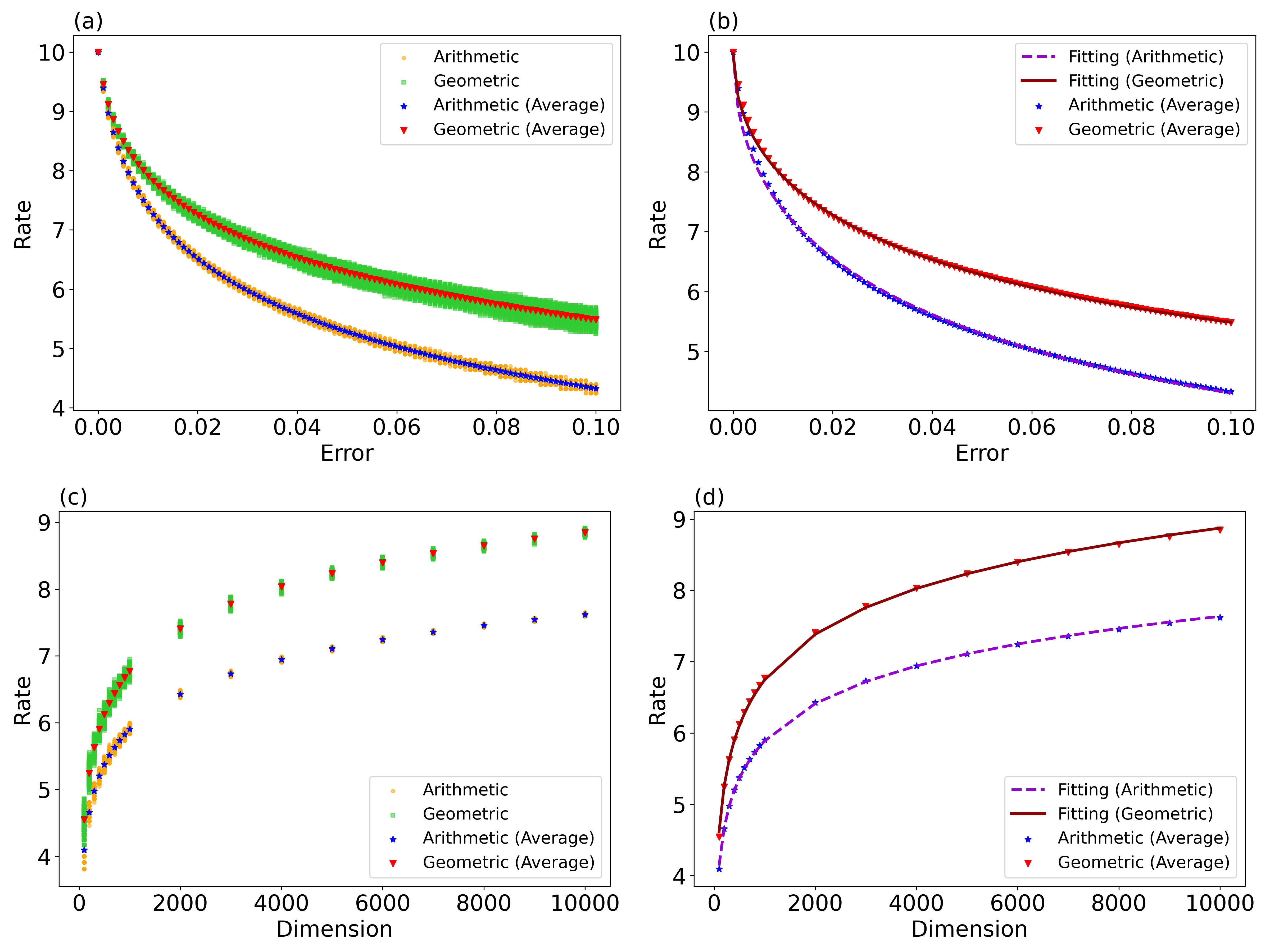}
        \caption{Results of numerical experiments. 
        (a): Graph of the compression rate as a function of allowed error $\epsilon$ shown in Eq.~\eqref{eq:allowed_error}. 
        (b): Graph of the fitting curves of the average results of (a).
        (c): Graph of the compression rate~\eqref{eq:classical_upperbound} as a function of the dimension $D$ of the system under allowed approximation $1/\sqrt{D}$. 
        (d): Graph of the fitting curves of the average results of (c).
        In (a) and (c), the yellow circles and green squares represent the compression rates of randomly generated probability distributions obtained by the arithmetic and geometric mean methods, respectively. 
        The blue stars and red triangles represent the averages of the yellow circles and green squares, respectively; 
        that is, the blue stars show the averaged results of the arithmetic mean methods, and the red triangles show those of the geometric mean methods. 
        In (b) and (d), the violet dashed lines are the fitting of the blue stars; the brown solid lines are the fitting of the red triangles.
        In (b), we employ the model function defined in Eq.~\eqref{eq:fitting_rate_error}, and we obtained the fitting parameters $a = 7.859(6)$, $b = 4.077(5)$ for the violet dashed line (the arithmetic mean method) and $a = 6.262(2)$, $b = 4.030(2)$ for the brown solid line (the geometric mean method). 
        In (d), we use the function defined in Eq.~\eqref{eq:fitting_rate_dim}., 
        and we obtained the fitting parameters $a = 0.525892(4)$, $b = 0.6477(4)$ for the violet dashed line (the arithmetic mean method) and $a = 0.640847(9)$, $b = 0.358(1)$ for the brown solid line (the geometric mean method).
        The term ``rate'' in the $y$-axes refers to the upper bound shown in Eq.~\eqref{eq:classical_upperbound}. 
        Note that the quantities shown in the graphs are dimensionless. 
        }
        \label{fig:error_rate}
\end{figure*}

\subsubsection{Numerical Simulation for Approximation-Rate Trade-off}
    Here, we evaluate the performance of the two approximation methods with various values of allowed approximation $\epsilon$. 
    In the simulation, we fix the space to $\mathcal{H} = \mathbb{C}^{1024}$; that is, $d_{\mathcal{H}} = 1024 = 2^{10}$. 
    We generate 1000 random diagonal states $\sigma$ on this space. 
    In more detail, we randomly generate a real number in the range $[0,1]$ according to the uniform distribution $1024$ times to obtain a vector on $\mathbb{R}^{1024}$. 
    Then, we normalize the obtained vector to obtain a probability distribution, and we regard this probability distribution as a classical state. 
    For each state, we create the approximate states $\sigma^{(A)}$ and $\sigma^{(G)}$ by using 
    the arithmetic mean method and the geometric mean method. 
    
    We compare the rates of the two methods. 
    Figure~\ref{fig:error_rate}(a) shows that the arithmetic mean method performs better than the geometric mean method on average. 
    In particular, the results show the tendency that the difference between the two methods becomes large as the allowed error increases. 
    
    Furthermore, we estimate the curves representing average values of the results in the figure.
    Observing the graph shown in Figure~\ref{fig:error_rate}(a), we adopt the following function
    \begin{equation}~\label{eq:fitting_rate_error}
        f(x) = \log_2 d_{\mathcal{H}} - a\left(1-\mathrm{e}^{-b\sqrt{x}}\right)
    \end{equation}
    with parameters $a$ and $b$ as a fitting function. 
    Since the compression rate is strictly upper-bounded by $\log_2 d_{\mathcal{H}}$ when we do not allow any approximation, the second term $a\left(1-\mathrm{e}^{-b\sqrt{x}}\right)$ of Eq.~\eqref{eq:fitting_rate_error} is considered to represent the degree of reduction caused by an approximation. 
    In addition, this function explains the tendency that the rate approaches $\log_2 d_{\mathcal{H}}$ as the error becomes small, and it also expresses the sudden decrease around the error $1/d_{\mathcal{H}}$. 
    The results shown in Figure~\ref{fig:error_rate}(b) suggest that the fitting function defined in Eq.~\eqref{eq:fitting_rate_error} is the correct function characterizing the compression rate as a function of the error
    while we have not theoretically and analytically demonstrated it. 
    However, we do not believe that this function explains the full dependency of the rate on the error, because of the sensitive behavior of the rate against errors when errors are roughly larger than $1/\sqrt{d_{\mathcal{H}}}$. 
    For more general and deep understandings of the dependency, we need further investigations of the compression protocols. 
    
    \subsubsection{Numerical Simulation for Dimension-Rate Relation}
    We investigate the dependence of the performance of the two approximation methods on the dimension of the space. 
    Here, we adopt the error $\epsilon = 1/\sqrt{d_{\mathcal{H}}}$ 
    considering the discussion in Ref.~\cite{Anshu2022} stating that this size of error $f(\Lambda_n)\approx 1/\sqrt{d_{\mathcal{H}}}$ (see Eq.~\eqref{eq:error_f}) leads to 
    $g(\Lambda_n) \approx \log_2 d$ (see Eq.~\eqref{eq:error_g}). 
    See Appendix A of Ref.~\cite{Anshu2022} for more details. 
    Thus, $\epsilon = 1/\sqrt{d_{\mathcal{H}}}$ may well cause a large reduction of the rate. 
    We generate 1000 random diagonal states $\sigma$ on this space in the same way as in the first simulation. 
    For each state, we create the approximate 
    states $\sigma^{(A)}$ and $\sigma^{(G)}$ by using the arithmetic mean method and the geometric mean method. 
    
    Then, we compare the rates of the two methods. 
    Figure~\ref{fig:error_rate}(c) shows that the arithmetic mean method also performs better than the geometric mean method on average in this case. 
    
    To investigate the dependence of the difference between the two methods on the dimension, we also plotted the difference in Figure~\ref{fig:division}.
    As seen in the graph, the difference becomes large as the dimension gets large, and it linearly depends on the logarithm of dimension. 
    
    Moreover, we estimate the curves representing average values of the results in the graph.
    Observing the graphs shown in Figures~\ref{fig:error_rate}(c) and \ref{fig:division},
    adopt the following function
    \begin{equation}~\label{eq:fitting_rate_dim}
        f(x) = a\log_2x + b
    \end{equation}
    with parameters $a$ and $b$ as a fitting function.  
    The results shown in Figure~\ref{fig:error_rate}(d) imply that the fitting function defined in Eq.~\eqref{eq:fitting_rate_dim} is appropriate for characterizing the compression rate as a function of the dimension.  
    Since the compression rate is strictly upper-bounded by $\log_2 d_{\mathcal{H}}$ when we do not allow any approximation, the coefficient $a$ represents the degree of reduction caused by an approximation. 
 
    \begin{figure}
    \centering
    \includegraphics[keepaspectratio, width=\linewidth]{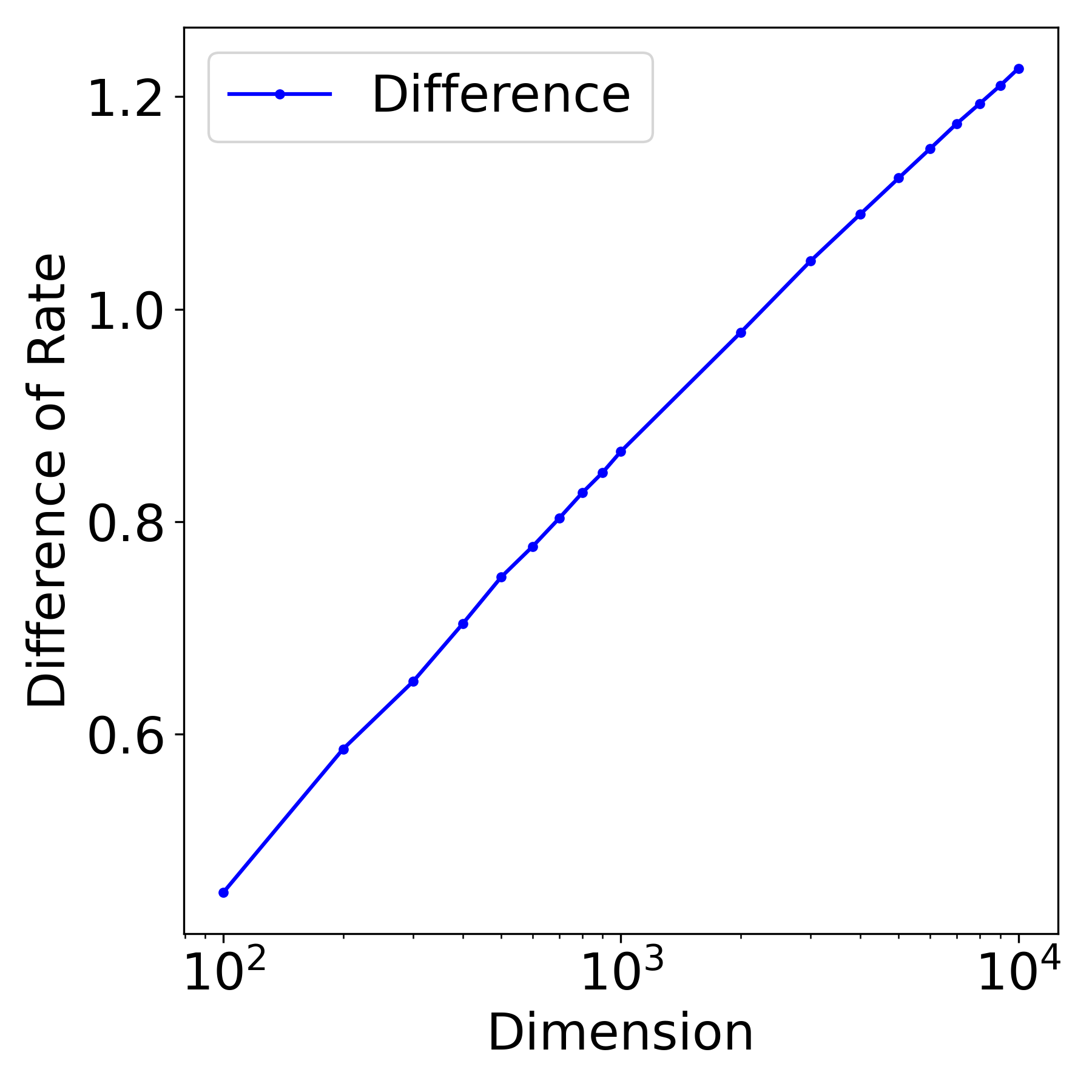}
    \caption{Graph of (the compression rate by the arithmetic mean method) $-$ (the compression rate by the geometric mean method) as a function of the dimension of system. 
    Here, ``rate'' refers to the upper bound shown in Eq.~\eqref{eq:classical_upperbound}.
    The horizontal axis is expressed in log scale.
    Note that the quantities shown in the graphs are dimensionless. 
    }
    \label{fig:division}
    \end{figure}

\section{Discussion and Conclusion}~\label{sec:Blind_discussions}
In this paper, we investigated blind compression of quantum ensembles under finite local approximations. 

In previous research, the optimal rate of blind compression was obtained through the KI decomposition 
by removing the redundant parts of a given ensemble. 
However, since KI decomposition is highly sensitive even to small approximations, 
approximation or error analysis of blind compression has not been explored much.
In this paper, we focused on the instability of the KI decomposition against approximations.
Taking advantage of the sensitivity of KI decomposition, we constructed a compression protocol that can have a substantial reduction of the compression rate if a finite approximation is allowed. 
In our protocol, the sender intentionally introduces distortion to a given ensemble by applying KI operations that almost preserve the ensemble. 
Let us note that the distortion implemented by the sender plays a major role in a full quantum rate-distortion theory for mixed states~\cite{Khanian_Kuroiwa2022}. 
We explicitly showed a reduction of the compression rate by several examples. 
In particular, in Example~\ref{Example2_blindcomp}, we saw a near-maximal reduction of the compression rate allowed by the dimension. 
Remarkably, the rate reduction of our protocol persists even when approximation $\epsilon$ is $\approx 1/d_{\mathcal{H}}$, contrasting to a result of Ref.~\cite{Anshu2022} for two classical distributions, where a large rate reduction is seen for $\epsilon \approx 1/\sqrt{d_{\mathcal{H}}}$. 
We also share the following insightful comment from one of our referees. 
For any quantum ensemble, $\Phi_1$, for any neighborhood of $\Phi_1$, there is an ensemble $\Phi_2$ in this neighborhood with trivial KI decomposition. 
If the initial ensemble $\Phi_1$ has redundant parts, the two ensembles $\Phi_1$ and $\Phi_2$ will have different compression rates. Our examples can be understood as a manifestation of this generic situation. 
We want to point out that our simple explicit example with only two states provides a nearly extremal quantitative demonstration. 
Moreover, we investigated blind compression of classical ensembles to analyze general properties of our compression protocol. 
We proposed two compression methods for two-state classical ensembles including the flat state, namely, the arithmetic mean method and the geometric mean method. 
We numerically investigated these compression methods of two-state classical ensembles and discovered that the arithmetic mean method performs better than the geometric mean method. 

As a future direction, it would be interesting to analyze our protocol more generally. 
As noted in the previous sections, our protocol does not necessarily show a large reduction of the compression rate for an arbitrary ensemble.  
For example, if the allowed approximation is sufficiently small, our protocol might have rate equal to the optimal compression rate with a vanishing error, obtained in Ref.~\cite{Koashi2001}.
The difficulty of the analysis lies in the absence of an approximation of the KI decomposition. 
While finite approximations can significantly lower the compression rate as we showed in this chapter because of the instability of the KI decomposition, the instability also makes the general analysis intractable. 
Investigation of compression protocols on general two-state quantum ensembles might be a good first step for this direction. 
For a two-state ensemble, we may consider that one of the two states is diagonal by choosing an appropriate basis. 
Fixing the basis would be helpful to study a new approximate compression protocol and the dependence of the compression protocol on allowed errors. 

Thus, we shed light on blind quantum data compression with the allowance of approximations 
by revealing that the approximation-sensitive nature of KI decomposition can be used to achieve a substantial rate reduction. 
We believe that our findings fuel further investigation of blind compression with finite approximations, leading to a more general understanding of approximation-rate trade-off in this setup. 

\begin{acknowledgments}
We thank Anurag Anshu, Felix Leditzky, and Shima Bab Hadiashar for helpful discussions. 
We are grateful to Crystal Senko, John Watrous, and Norbert L\"{u}tkenhaus for their advice and feedback.
KK was supported by a Mike and Ophelia Lazaridis Fellowship, the Funai Foundation, and a Perimeter Residency Doctoral Award. 
DL was supported by NSERC. 
\end{acknowledgments}

\bibliography{BlindCompression}

\clearpage
\appendix
\section{Proof of Theorem~\ref{thm:KIoperations}}~\label{proof_KIoperation}
    We can give the descriptions of $\mathcal{K}_{\mathrm{off}}$ and $\mathcal{K}_{\mathrm{on}}$ explicitly by using Kraus representations. 
    First, $\mathcal{K}_{\mathrm{off}}$ is given by Kraus operators
    \begin{equation}
        A^{(l)}_{j_l} \coloneqq I_{\mathcal{H}^{(l)}_Q} \otimes \bra{j_l},
    \end{equation}
    where $\{\ket{j_l}:j_l\}$ is an orthonormal basis of $\mathcal{H}^{(l)}_Q$ for all $l \in \Xi$.
    Kraus operators $A^{(l)}_{j_l}$ apply to the $l$th block, and $\{A^{(l)}_{j_l}:j_l\}$ forms the partial trace over $\mathcal{H}^{(l)}_R$. 
    Indeed, 
    \begin{equation}
        \begin{aligned}
        \sum_{l\in\Xi} \sum_{j_l} A^{(l)}_{j_l} \rho_a (A^{(l)}_{j_l} )^\dagger
        &= \bigoplus_{l \in \Xi} q^{(a,l)}\rho^{(a,l)}_Q\otimes \left(\sum_{j_l}\braket{j_l|\rho^{(l)}_R|j_l}\right)\\
        &= \bigoplus_{l \in \Xi} q^{(a,l)}\rho^{(a,l)}_Q \otimes \tr(\rho^{(l)}_R)\\
        &=\bigoplus_{l \in \Xi} q^{(a,l)}\rho^{(a,l)}_Q,
        \end{aligned}
    \end{equation}
    for all $a\in\Sigma$. 
    In addition, $ \{A^{(l)}_{j_l}:l\in\Xi, j_l\}$ forms a valid Kraus representation because 
    \begin{equation}
        \begin{aligned}
            \sum_{l\in\Xi} \sum_{j_l}(A^{(l)}_{j_l} )^\dagger A^{(l)}_{j_l} 
            &= \sum_{l\in\Xi} \sum_{j_l} I_{\mathcal{H}^{(l)}_Q} \otimes \proj{j_l}\\
            &= \sum_{l\in\Xi} I_{\mathcal{H}^{(l)}_Q} \otimes \left(\sum_{j_l}\proj{j_l}\right)\\
            &= \sum_{l\in\Xi} I_{\mathcal{H}^{(l)}_Q} \otimes I_{\mathcal{H}^{(l)}_R}\\
            &= I_{\mathcal{H}}.
        \end{aligned}
    \end{equation}
    Therefore, we can construct $\mathcal{K}_{\mathrm{off}}$ by a Kraus representation
    \begin{equation}
        \{A^{(l)}_{j_l}:l\in\Xi, j_l\}.
    \end{equation}
    Next, $\mathcal{K}_{\mathrm{on}}$ is given by Kraus operators
    \begin{equation}
        A^{(l)}_{k_l} \coloneqq I_{\mathcal{H}^{(l)}_Q} \otimes \sqrt{r_{k_l}}\ket{k_l},
    \end{equation}
    where $\{\ket{k_l}:k_l\}$ is an orthonormal basis of $\mathcal{H}^{(l)}_Q$ for all $l \in \Xi$ corresponding to a spectral decomposition 
    \begin{equation}
        \rho^{(l)}_R \coloneqq \sum_{k_l} r_{k_l}\proj{k_l}
    \end{equation}
    with eigenvalues $r_{k_l} \geq 0$.
    Kraus operators $A^{(l)}_{k_l}$ apply to the $l$th block, and $\{A^{(l)}_{k_l}:k_l\}$ forms the construction of $\rho^{(l)}_R$ on system $\mathcal{H}^{(l)}_R$. 
    Indeed, we have that
    \begin{equation}
        \begin{aligned}
        &\sum_{l\in\Xi} \sum_{k_l} A^{(l)}_{k_l} \left(\bigoplus_{l \in \Xi} q^{(a,l)}\rho^{(a,l)}_Q\right) (A^{(l)}_{k_l} )^\dagger\\
        &= \bigoplus_{l \in \Xi} q^{(a,l)}\rho^{(a,l)}_Q\otimes \left(\sum_{k_l} r_{k_l}\proj{k_l}\right)\\
        &= \bigoplus_{l \in \Xi} q^{(a,l)}\rho^{(a,l)}_Q \otimes \rho^{(l)}_R\\
        &=\rho_a
        \end{aligned}
    \end{equation}
    for all $a\in\Sigma$. 
    In addition, $ \{A^{(l)}_{k_l}:l\in\Xi, k_l\}$ forms a valid Kraus representation because it holds that
    \begin{equation}
        \begin{aligned}
            \sum_{l\in\Xi} \sum_{k_l}(A^{(l)}_{k_l} )^\dagger A^{(l)}_{k_l} 
            &= \sum_{l\in\Xi} \sum_{k_l} I_{\mathcal{H}^{(l)}_Q} \otimes r_{k_l}\braket{k_l|k_l}\\
            &= \sum_{l\in\Xi} I_{\mathcal{H}^{(l)}_Q} \otimes \left(\sum_{k_l}r_{k_l}\right)\\
            &= \sum_{l\in\Xi} I_{\mathcal{H}^{(l)}_Q} \otimes 1\\
            &= I_{\mathcal{H}^{(l)}_Q}.
        \end{aligned}
    \end{equation}
    Therefore, we can construct $\mathcal{K}_{\mathrm{on}}$ by  a Kraus representation
    \begin{equation}
        \{A^{(l)}_{k_l}:l\in\Xi, k_l\}.
    \end{equation}

\setcounter{theorem}{0}
\renewcommand{\thetheorem}{B\arabic{theorem}}
\section{Approximation of a quantum ensemble in a fixed basis}~\label{appendix:classical_ensemble}
A classical ensemble is defined with a fixed basis of a given quantum system in which every state in the ensemble is diagonalized. (See Definition~\ref{def:classical_ensemble}.)
Here, we analyze an approximation of a given classical ensemble in the fixed basis; that is, we consider an approximate ensemble whose structure of the KI decomposition is given in the same fixed basis. When we approximate a classical ensemble with respect to the fixed basis, we only have to consider an approximation of diagonal elements in the sense of the following proposition. 
\begin{proposition}
Let $\mathcal{H}$ be a quantum system. 
Let $\rho,\sigma \in \Density(\mathcal{H})$ be quantum states on the system. 
Suppose that $\rho$ and $\sigma$ form a classical ensemble; that is, we can write
\begin{align}
    \rho &= \sum_{i=1}^{d_{\mathcal{H}}} \rho_i \proj{i},\\
    \sigma &= \sum_{i=1}^{d_{\mathcal{H}}} \sigma_i \proj{i}
\end{align}
for an orthonormal basis $\{\ket{i}\in\mathcal{H}:1\leq i \leq d_{\mathcal{H}}\}$ of $\mathcal{H}$ and probability distributions $\{\rho_i: 1\leq i \leq d_{\mathcal{H}}\}$ and $\{\sigma_i: 1\leq i \leq d_{\mathcal{H}}\}$. 
Suppose that $\rho$ and $\sigma$ can be approximated in the same basis as 
\begin{align}
    &\left\|\rho - \bigoplus_{l\in\Xi} q^{(\rho,l)} \omega_Q^{(\rho,l)} \otimes \omega_R^{(l)} \right\|_1 \leq \epsilon,\\
    &\left\|\sigma - \bigoplus_{l\in\Xi} q^{(\sigma,l)} \omega_Q^{(\sigma,l)} \otimes \omega_R^{(l)} \right\|_1 \leq \epsilon
\end{align}
where the structure of the decomposition is also given in the basis $\{\ket{i}\in\mathcal{H}:1\leq i \leq d_{\mathcal{H}}\}$. Then, there exists an approximation such that all $\omega_Q^{(\rho,l)}$, $\omega_Q^{(\sigma,l)}$, and $\omega_R^{(l)}$ are diagonal. 
\end{proposition}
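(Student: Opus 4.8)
The plan is to run a dephasing (pinching) argument. I would introduce the completely dephasing channel $\mathcal{P}$ associated with the fixed basis $\{\ket{i}\}_{i=1}^{d_{\mathcal{H}}}$, defined by $\mathcal{P}(X) = \sum_{i=1}^{d_{\mathcal{H}}} \ket{i}\bra{i} X \ket{i}\bra{i}$. This is a legitimate quantum channel (the pinching map of a von Neumann measurement in that basis), so by the monotonicity of the trace norm under quantum channels recalled in Sec.~\ref{subsec:notation} it satisfies $\|\mathcal{P}(Y)\|_1 \leq \|Y\|_1$ for every linear operator $Y$. Since $\rho$ and $\sigma$ are by hypothesis already diagonal in this basis, we have the two fixed points $\mathcal{P}(\rho) = \rho$ and $\mathcal{P}(\sigma) = \sigma$. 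The idea is that applying $\mathcal{P}$ does nothing to the targets $\rho,\sigma$ but diagonalizes the approximating states, all while only decreasing the trace-norm error.

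The structural heart of the argument is that the assumed block--tensor decomposition is laid out in the same fixed basis, so that each $\ket{i}$ sits in a single block $\mathcal{H}_Q^{(l)} \otimes \mathcal{H}_R^{(l)}$ and factorizes as a product of basis vectors of $\mathcal{H}_Q^{(l)}$ and $\mathcal{H}_R^{(l)}$. Under this compatibility, $\mathcal{P}$ preserves the block structure and, restricted to each block, factorizes as $\mathcal{P} = \mathcal{P}_Q^{(l)} \otimes \mathcal{P}_R^{(l)}$ for the dephasing channels $\mathcal{P}_Q^{(l)}$ and $\mathcal{P}_R^{(l)}$ on the two tensor factors; an elementary computation gives $\mathcal{P}(A \otimes B) = \mathcal{P}_Q^{(l)}(A) \otimes \mathcal{P}_R^{(l)}(B)$. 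I would therefore set $\tilde{\omega}_Q^{(\rho,l)} \coloneqq \mathcal{P}_Q^{(l)}(\omega_Q^{(\rho,l)})$, $\tilde{\omega}_Q^{(\sigma,l)} \coloneqq \mathcal{P}_Q^{(l)}(\omega_Q^{(\sigma,l)})$, and $\tilde{\omega}_R^{(l)} \coloneqq \mathcal{P}_R^{(l)}(\omega_R^{(l)})$, each of which is a diagonal density operator. Because the \emph{same} $\mathcal{P}_R^{(l)}$ acts on the single $\omega_R^{(l)}$ shared by both decompositions, the dephased redundant factor $\tilde{\omega}_R^{(l)}$ stays independent of the label $\rho$ versus $\sigma$, so the required redundant structure survives.

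The proof then closes by applying $\mathcal{P}$ to each approximation hypothesis and invoking channel monotonicity. For $\rho$,
\begin{align}
    \left\| \rho - \bigoplus_{l\in\Xi} q^{(\rho,l)} \tilde{\omega}_Q^{(\rho,l)} \otimes \tilde{\omega}_R^{(l)} \right\|_1
    &= \left\| \mathcal{P}(\rho) - \mathcal{P}\left( \bigoplus_{l\in\Xi} q^{(\rho,l)} \omega_Q^{(\rho,l)} \otimes \omega_R^{(l)} \right) \right\|_1 \nonumber\\
    &\leq \left\| \rho - \bigoplus_{l\in\Xi} q^{(\rho,l)} \omega_Q^{(\rho,l)} \otimes \omega_R^{(l)} \right\|_1 \leq \epsilon,
\end{align}
and the identical chain of inequalities for $\sigma$ yields the second bound. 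This produces an approximation with the same coefficients $q^{(\cdot,l)}$ and the same block--tensor structure in which every factor is diagonal, which is exactly the assertion.

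The step I expect to demand the most care is making precise the clause that the decomposition structure is ``given in the fixed basis,'' and deriving from it that $\mathcal{P}$ factorizes as $\mathcal{P}_Q^{(l)} \otimes \mathcal{P}_R^{(l)}$ on each block. One must check that the fixed basis genuinely restricts to a product basis on each $\mathcal{H}_Q^{(l)} \otimes \mathcal{H}_R^{(l)}$, since only then does $\mathcal{P}$ send product operators to products of their dephasings and refrain from generating spurious cross terms between blocks or between the two tensor factors. Once this compatibility is pinned down, the remaining pieces---the fixed-point property of $\mathcal{P}$ on diagonal states and the trace-norm contraction---are routine.
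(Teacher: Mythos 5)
Your proposal is correct and follows essentially the same route as the paper's own proof in Appendix~B: apply the completely dephasing channel in the fixed basis, use that $\rho$ and $\sigma$ are fixed points, and invoke trace-norm monotonicity under quantum channels. You are in fact slightly more careful than the paper in making explicit that the pinching factorizes as $\mathcal{P}_Q^{(l)}\otimes\mathcal{P}_R^{(l)}$ on each block (so the shared redundant factor $\omega_R^{(l)}$ remains label-independent), a point the paper's proof uses implicitly by writing $\Delta(\omega_Q^{(\cdot,l)})\otimes\Delta(\omega_R^{(l)})$ without comment.
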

\begin{proof}
Let us take some $l \in \Xi$, and consider $q^{(\rho,l)} \omega_Q^{(\rho,l)} \otimes \omega_R^{(l)}$ and 
$q^{(\sigma,l)} \omega_Q^{(\sigma,l)} \otimes \omega_R^{(l)}$. Let $\rho^{(l)}$ and $\sigma^{(l)}$ be the corresponding block of $\rho$ and $\sigma$ respectively. 
Let $\Delta \in \Channel(\mathcal{H},\mathcal{H})$ be \textit{the completely dephasing channel} with respect to the basis $\{\ket{i}\in\mathcal{H}:1\leq i \leq d_{\mathcal{H}}\}$
defined as 
\begin{equation}
    \Delta(\rho) = \sum_{i = 1}^{d_{\mathcal{H}}} \braket{i|\rho|i} \proj{i}
\end{equation}
for $\rho \in \Density(\mathcal{H})$. 
Then, we have
\begin{equation}
    \begin{aligned}
    &\left\|\rho^{(l)} - \bigoplus_{l\in\Xi} q^{(\rho,l)} \Delta(\omega_Q^{(\rho,l)}) \otimes \Delta(\omega_R^{(l)})\right\|_1\\
    &= \left\|\Delta(\rho^{(l)}) - \Delta\left(\bigoplus_{l\in\Xi} q^{(\rho,l)} \omega_Q^{(\rho,l)} \otimes \omega_R^{(l)}\right)\right\|_1\\
    &\leq \left\|\rho^{(l)} - \bigoplus_{l\in\Xi} q^{(\rho,l)} \omega_Q^{(\rho,l)} \otimes \omega_R^{(l)}\right\|_1. 
    \end{aligned}
\end{equation}
Similarly, 
\begin{equation}
    \begin{aligned}
    &\left\|\sigma^{(l)} - \bigoplus_{l\in\Xi} q^{(\sigma,l)} \Delta(\omega_Q^{(\sigma,l)}) \otimes \Delta(\omega_R^{(l)})\right\|_1\\
    &\leq \left\|\sigma^{(l)} - \bigoplus_{l\in\Xi} q^{(\sigma,l)} \omega_Q^{(\sigma,l)} \otimes \omega_R^{(l)}\right\|_1. 
    \end{aligned}
\end{equation}
Therefore, diagonal states $\bigoplus_{l\in\Xi} q^{(\rho,l)} \Delta(\omega_Q^{(\rho,l)}) \otimes \Delta(\omega_R^{(l)})$ and $\bigoplus_{l\in\Xi} q^{(\sigma,l)} \Delta(\omega_Q^{(\sigma,l)}) \otimes \Delta(\omega_R^{(l)})$ are also approximations of $\rho$ and $\sigma$. 
\end{proof}

\setcounter{theorem}{0}
\renewcommand{\thetheorem}{C\arabic{theorem}}
\section{Proof that $\sigma_1$ and $\sigma_2$ in Example~\ref{Example2_blindcomp} have no redundant parts}~\label{appendix:example2}
In this section, we prove that the two $2N$-dimensional states
\begin{align*}
   \sigma_1 &\coloneqq 
    \frac{1}{4N}\left(
    \begin{array}{ccccc}
    2 & \epsilon \mathrm{e}^{i\alpha} & 0 & \cdots & \epsilon \mathrm{e}^{-i\alpha} \\
    \epsilon\mathrm{e}^{-i\alpha} & 2 & \epsilon\mathrm{e}^{i\alpha} & \cdots & 0 \\
    \vdots &\ddots & \ddots & \ddots & \vdots\\
    0 & \cdots & \epsilon\mathrm{e}^{-i\alpha}& 2 & \epsilon\mathrm{e}^{i\alpha} \\
    \epsilon\mathrm{e}^{i\alpha} & 0 & \cdots & \epsilon\mathrm{e}^{-i\alpha} & 2 
    \end{array}
    \right)
    \\
    \sigma_2 &\coloneqq 
    \frac{1}{4N}\left(
    \begin{array}{cccccc}
    1+2\epsilon &        &   &   &        &   \\
      & \ddots &   &   &        &   \\
      &        & 1 + 2\epsilon&   &        &   \\
      &        &   & 3 -2\epsilon&        &   \\
      &        &   &   & \ddots &   \\
      &        &   &   &        & 3-2\epsilon  
    \end{array}
    \right)
\end{align*}
with $0<\epsilon<1/2$ and $0 < \alpha < 1/(4N)$, introduced in Example~\ref{Example2_blindcomp}, do not have redundant parts. 

First, we prove that $\sigma_1$ and $\sigma_2$ do not have a non-trivial (multi-block) block-diagonal structure. 
Define normalized vectors 
\begin{equation}
    \ket{v_k} = 
    \frac{1}{\sqrt{2N}}\left(
    \begin{array}{c}
        1   \\
        \omega^{k} \\ 
        \omega^{2k} \\ 
        \omega^{3k} \\ 
        \vdots \\ 
        \omega^{(2N-1)k}
    \end{array}
    \right)
\end{equation}
with
\begin{equation}
    \omega = \mathrm{e}^{i\tfrac{2\pi}{2N}}
\end{equation}
for $k = 0,1,\ldots, 2N-1$.
Note that $\omega^{2N} = 1$. 
Then, 
\begin{equation}
    \sigma_1 \ket{v_k} = \frac{1 + \epsilon\left(\cos\left(\frac{2\pi k}{2N} + \alpha \right)\right)}{2N} \ket{v_k}
\end{equation}
for $k = 0,1,\ldots,2N-1$; 
that is, $\ket{v_k}$ is an eigenvector of $\sigma_1$ with eigenvalue $\left(\cos\left(\frac{2\pi k}{2N} + \alpha \right)\right)/(2N)$.
Since $0 < \alpha < 1/(4N)$, 
there do not exist distinct $k,l \in\{ 0,1,2,\ldots,2N-1\}$ such that 
\begin{equation}
    \cos\left(\frac{2\pi k}{2N} + \alpha \right) = \cos\left(\frac{2\pi l}{2N} + \alpha \right). 
\end{equation}
That is, $\sigma_1$ is not degenerate. 
On the other hand, 
eigenvectors of $\sigma_2$ can be expressed as 
\begin{equation}
    \left(
    \begin{array}{c}
        a_1   \\
        a_2 \\ 
        \vdots  \\ 
        a_{N} \\ 
        0 \\
        0 \\
        \vdots \\ 
        0
    \end{array}
    \right)
    \,\, 
    \mathrm{or}
    \,\,
    \left(
    \begin{array}{c}
        0 \\
        0 \\
        \vdots \\ 
        0 \\ 
        a_1   \\
        a_2 \\ 
        \vdots  \\ 
        a_{N} \\ 
    \end{array}
    \right), 
\end{equation}
where $(a_{1},a_{2},\ldots,a_{N}) \neq (0,0,\ldots,0)$. 
Without loss of generality, we focus on normalized vectors $(a_{1},a_{2},\ldots,a_{N})$. 
Indeed, 
\begin{equation}
    \sigma_2
    \left(
    \begin{array}{c}
        a_1   \\
        a_2 \\ 
        \vdots  \\ 
        a_{N} \\ 
        0 \\
        0 \\
        \vdots \\ 
        0
    \end{array}
    \right)
    = \frac{1 + 2\epsilon}{4N} \left(
    \begin{array}{c}
        a_1   \\
        a_2 \\ 
        \vdots  \\ 
        a_{N} \\ 
        0 \\
        0 \\
        \vdots \\ 
        0
    \end{array}
    \right)
\end{equation}
and 
\begin{equation}
    \sigma_2
    \left(
    \begin{array}{c}
        0 \\
        0 \\
        \vdots \\ 
        0 \\ 
        a_1   \\
        a_2 \\ 
        \vdots  \\ 
        a_{N} \\ 
    \end{array}
    \right)
    = \frac{3 - 2\epsilon}{4N} 
    \left(
    \begin{array}{c}
        0 \\
        0 \\
        \vdots \\ 
        0 \\ 
        a_1   \\
        a_2 \\ 
        \vdots  \\ 
        a_{N} \\ 
    \end{array}
    \right). 
\end{equation}
Note that $1+2\epsilon < 3 -2\epsilon$ since $0 < \epsilon < 1/2$. 

Suppose that $\sigma_1$ and $\sigma_2$ have a shared non-trivial block diagonal structure. 
In this case, there exist a unitary matrix $U$, non-trivial subspaces $\tilde{\mathcal{H}},\mathcal{H}'$ with $d_{\tilde{\mathcal{H}}} + d_{\mathcal{H}}' = 2N$, and positive semidefinite matrices $P_1,P_2$ on $\tilde{\mathcal{H}}$ and $Q_1,Q_2$ on $\mathcal{H}'$ such that 
\begin{align}
    \label{eq:sigma_1_block}
    U\sigma_1 U^\dagger &= P_1\oplus Q_1, \\ 
    \label{eq:sigma_2_block}
    U\sigma_2 U^\dagger &= P_2\oplus Q_2. 
\end{align}
Since $\sigma_1$ is not degenerate, 
from Eq.~\eqref{eq:sigma_1_block}, 
vector space $\tilde{\mathcal{H}}$ is spanned by some $d_{\tilde{\mathcal{H}}}$ vectors chosen from $\{U\ket{v_k}\}_{k=1}^{2N}$. 
Let $\{U\ket{v_{k_j}}\}_{j = 1}^{d_{\tilde{\mathcal{H}}}}$ denote such $d_{\tilde{\mathcal{H}}}$ vectors. 
Similarly, from Eq.~\eqref{eq:sigma_2_block}, vector space $\tilde{\mathcal{H}}$ is also spanned by some $d_{\tilde{\mathcal{H}}}$ vectors $\{U\ket{u_l}\}_{l=1}^{d_{\tilde{\mathcal{H}}}}$, where $\{\ket{u_l}\}_{l=1}^{2N}$ are linearly independent eigenvectors of $\sigma_2$. 
In particular, since $U\ket{u_1} \in \tilde{\mathcal{H}}$, we may write 
\begin{equation}
\label{eq:condition}
    U\ket{u_1} = \sum_{j=1}^{d_{\tilde{\mathcal{H}}}}c_{k_j}U\ket{v_{k_j}}
\end{equation}
with some $c_{k_j} \in \mathbb{C}$. 
By applying $U^\dagger$ from the left, 
\begin{equation}
    \ket{u_1} = \sum_{j=1}^{d_{\tilde{\mathcal{H}}}}c_{k_j}\ket{v_{k_j}}. 
\end{equation}
Recall 
\begin{equation}
    \ket{v_{k_j}} = 
    \frac{1}{\sqrt{2N}}
    \left(
    \begin{array}{c}
        1   \\
        \omega^{k_j} \\ 
        \omega^{2k_j} \\ 
        \omega^{3k_j} \\ 
        \vdots \\ 
        \omega^{(2N-1)k_j}
    \end{array}
    \right).
\end{equation}
Let $\vec{s}_{k_j}$ denote the vector consisting of the top $N$ entries of $\ket{v_{k_j}}$ 
and 
$\vec{t}_{k_j}$ denote the vector consisting of the bottom $N$ entries of $\ket{v_{k_j}}$; that is, 
\begin{align}
    \vec{s}_{k_j}
    &\coloneqq 
    \frac{1}{\sqrt{2N}}\left(
    \begin{array}{c}
        1   \\
        \omega^{k_j} \\ 
        \omega^{2k_j} \\ 
        \vdots \\ 
        \omega^{(N-1)k_j}
    \end{array}
    \right) \\ 
    \vec{t}_{k_j} 
    &\coloneqq 
    \frac{1}{\sqrt{2N}}\left(
    \begin{array}{c}
        \omega^{Nk_j} \\ 
        \omega^{(N+1)2k_j} \\ 
        \omega^{(N+2)2k_j} \\ 
        \vdots \\ 
        \omega^{(2N-1)k_j}
    \end{array}
    \right), 
\end{align}
and $\ket{v_{j_k}} = \vec{s}_{j_k} \oplus \vec{t}_{j_k}$. 

When $\ket{u_1}$ is given as 
\begin{equation}
    \ket{u_1} = \left(
    \begin{array}{c}
        0 \\
        0 \\
        \vdots \\ 
        0 \\ 
        a_1   \\
        a_2 \\ 
        \vdots  \\ 
        a_{N} 
    \end{array}
    \right), 
\end{equation}
to have Eq.~\eqref{eq:condition}, it is necessary that 
\begin{equation}~\label{eq:condition_necessary}
    \left(
    \begin{array}{c}
        0 \\
        0 \\
        \vdots \\ 
        0  
    \end{array}
    \right) = \sum_{j=1}^{d_{\tilde{\mathcal{H}}}} c_{k_j}\vec{s}_{k_j}. 
\end{equation}
Here, we looked at the top $N$ entries of each vector in Eq.~\eqref{eq:condition}.  
We will prove that $d_{\tilde{\mathcal{H}}} > N$. 
We first show that for any $N$ distinct vectors $\{\ket{v_{k_l}}\}_{l=1}^{N}$ chosen from $\{\ket{v_k}\}_{k=1}^{2N}$, $\{\vec{s}_{k_l}\}_{l=1}^{N}$ are linearly independent. 
For this purpose, it suffices to show that the determinant of matrix 
\begin{equation*}~\label{eq:vandermonde_matrix}
    M \coloneqq \left(
    \begin{array}{ccccc}
        1 & 1 & 1 &\cdots & 1\\
        \omega^{k_1} & \omega^{k_2} & \omega^{k_3} &\cdots&  \omega^{k_N} \\ 
        \omega^{2k_1} & \omega^{2k_2} & \omega^{2k_3} &\cdots&  \omega^{2k_N} \\
        \vdots & \vdots & \vdots & \ddots & \vdots \\ 
        \omega^{(N-1)k_1} & \omega^{(N-1)k_2} & \omega^{(N-1)k_3} &\cdots&  \omega^{(N-1)k_N}
    \end{array}
    \right)
\end{equation*}
is not zero. 
Recalling Vandermonde's determinant, the determinant of $M$ is given as
\begin{equation}
    \mathrm{det}\,M = \prod_{1\leq p<q \leq N} (\omega^{k_q}- \omega^{k_p}). 
\end{equation}
Since $\omega^{k_p} \neq \omega^{k_q}$ for any $1\leq p<q \leq N$, $\mathrm{det}\,M \neq 0$, and thus  $\{\vec{s}_{k_l}\}_{l=1}^{N}$ are linearly independent. 
Now, we show that $d_{\tilde{\mathcal{H}}} > N$.
By way of contradiction, suppose that $d_{\tilde{\mathcal{H}}} \leq N$. 
From the argument above, $\{\vec{s}_{k_j}\}_{j=1}^{d_{\tilde{\mathcal{H}}}}$ in RHS of Eq.~\eqref{eq:condition_necessary} are linearly independent. 
Hence, to have Eq.~\eqref{eq:condition_necessary}, it is necessary that $c_{k_1} = c_{k_2} = \cdots = c_{k_N} = 0$. 
In this case, by Eq.~\eqref{eq:condition}, $a_1 = a_2 = \cdots = a_N = 0$, but this contradicts $(a_{1},a_{2},\ldots,a_{N}) \neq (0,0,\ldots,0)$. 
Therefore, $d_{\tilde{\mathcal{H}}} > N$. 

On the other hand, when $\ket{u_1}$ is given as 
\begin{equation}
    \ket{u_1} = \left(
    \begin{array}{c}
        a_1   \\
        a_2 \\ 
        \vdots  \\ 
        a_{N} \\ 
        0 \\
        0 \\
        \vdots \\ 
        0
    \end{array}
    \right), 
\end{equation}
to have Eq.~\eqref{eq:condition}, it is necessary that 
\begin{equation}~\label{eq:condition_necessary_2}
    \left(
    \begin{array}{c}
        0 \\
        0 \\
        \vdots \\ 
        0  
    \end{array}
    \right) = \sum_{j=1}^{d_{\tilde{\mathcal{H}}}} c_{k_j}\vec{t}_{k_j}. 
\end{equation}
Here, we looked at the bottom $N$ entries of each vector in Eq.~\eqref{eq:condition}.  
With a similar argument, we also have $d_{\tilde{\mathcal{H}}} > N$ in this case. 
Indeed, 
by observing the determinant of matrix 
\begin{equation*}
    \left(
    \begin{array}{ccccc}
        \omega^{Nk_1} & \omega^{Nk_2} & \omega^{Nk_3} &\cdots & \omega^{Nk_N}\\
        \omega^{(N+1)k_1} & \omega^{(N+1)k_2} & \omega^{(N+1)k_3} &\cdots&  \omega^{(N+1)k_N} \\ 
        \omega^{(N+2)k_1} & \omega^{(N+2)k_2} & \omega^{(N+2)k_3} &\cdots&  \omega^{(N+2)k_N} \\
        \vdots & \vdots & \vdots & \ddots & \vdots \\ 
        \omega^{(2N-1)k_1} & \omega^{(2N-1)k_2} & \omega^{(2N-1)k_3} &\cdots&  \omega^{(2N-1)k_N}
    \end{array}
    \right)
\end{equation*}
is given as 
\begin{equation}
   \omega^{N(k_1 + k_2 + \cdots + k_N)} \prod_{1\leq p<q \leq N} (\omega^{k_q}- \omega^{k_p}) \neq 0, 
\end{equation}
for any $N$ distinct vectors $\{\ket{v_{k_l}}\}_{l=1}^{N}$ chosen from $\{\ket{v_k}\}_{k=1}^{2N}$, $\{\vec{t}_{k_l}\}_{l=1}^{N}$ are linearly independent, 
and this leads to $d_{\tilde{\mathcal{H}}} > N$. 

By applying the same argument, we also have $d_{\mathcal{H}'} > N$. 
Hence, $d_{\tilde{\mathcal{H}}} + d_{\mathcal{H}'} > 2N$, which contradicts $d_{\tilde{\mathcal{H}}} + d_{\mathcal{H}'} = 2N$. 
Therefore, $\sigma_1$ and $\sigma_2$ cannot be expressed as Eqs.~\eqref{eq:sigma_1_block} and \eqref{eq:sigma_2_block}; 
that is, $\sigma_1$ and $\sigma_2$ do not have a shared non-trivial block-diagonal structure. 

We proved that $\sigma_1$ and $\sigma_2$ may only have a single-block KI decomposition. 
Now we show that $\sigma_1$ and $\sigma_2$ do not have a single-block redundant part. 
By way of contradiction, 
suppose that we may express 
\begin{align}
    \label{eq:sigma1_redundant}
    \sigma_1 
    &= U (\tilde{\sigma}_1 \otimes \omega) U^\dagger , \\ 
    \label{eq:sigma2_redundant}
    \sigma_2  
    &= U (\tilde{\sigma}_2 \otimes \omega)U^\dagger
\end{align}
using some unitary matrix $U$ and density matrices $\tilde{\sigma}_1$, $\tilde{\sigma}_2$, and $\omega$. 
We assume $\omega$ is $d\times d$ matrix with $d\geq 2$ since otherwise $\omega = 1$ and this will not serve as a redundant part. 
Let $V$ be the unitary matrix corresponding to the permutation of the first and the second systems in Eqs.~\eqref{eq:sigma1_redundant} and \eqref{eq:sigma2_redundant}. 
We may rewrite Eqs.~\eqref{eq:sigma1_redundant} and \eqref{eq:sigma2_redundant} as 
\begin{align}
    \label{eq:sigma1_redundant_rev}
    \sigma_1 
    &= UV (\omega \otimes \tilde{\sigma}_1) V^\dagger U^\dagger , \\ 
    \label{eq:sigma2_redundant_rev}
    \sigma_2  
    &= UV (\omega \otimes \tilde{\sigma}_2)V^\dagger U^\dagger. 
\end{align}
Suppose that $\omega$ is diagonalized as 
\begin{equation}~\label{eq:diag_omega}
    \omega = U_{\omega} \left(
    \begin{array}{cccc}
    \lambda_1 &        &   &   \\
      & \lambda_2 &   &   \\
      &        & \ddots& \\
      &        &   & \lambda_d
    \end{array}
    \right)
    U_{\omega}^\dagger 
\end{equation}
using some $d\times d$ unitary matrix $U_{\omega}$ and the eigenvalues $\{\lambda_j\}_{j=1}^d$ of $\omega$. 
Combining Eqs.~\eqref{eq:sigma1_redundant_rev}, \eqref{eq:sigma2_redundant_rev}, 
and \eqref{eq:diag_omega}, 
\begin{align}
    \label{eq:sigma1_redundant_block}
    \sigma_1 
    &= UV(U_{\omega}\otimes I)\left(\bigoplus_{j=1}^{d} \lambda_j \tilde{\sigma}_1 \right) (U_{\omega}\otimes I)^\dagger V^\dagger U^\dagger \\ 
    \label{eq:sigma2_redundant_block}
    \sigma_2
    &= UV(U_{\omega}\otimes I)\left(\bigoplus_{j=1}^{d} \lambda_j \tilde{\sigma}_2\right)(U_{\omega}\otimes I)^\dagger V^\dagger U^\dagger,  
\end{align}
where $I$ is the identity matrix on the system $\tilde{\sigma}_1$ and $\tilde{\sigma}_2$ reside in. 
Here, $UV(U_{\omega}\otimes I)$ is also unitary. 
Since $d\geq 2$, from Eqs.~\eqref{eq:sigma1_redundant_block} and \eqref{eq:sigma2_redundant_block}, $\sigma_1$ and $\sigma_2$ have a shared non-trivial block-diagonal structure, which is contradiction. 
Therefore, $\sigma_1$ and $\sigma_2$ cannot be expressed as Eqs.~\eqref{eq:sigma1_redundant} and \eqref{eq:sigma2_redundant}, and thus $\sigma_1$ and $\sigma_2$ do not have a redundant part. 
\end{document}